\newtheorem{theorem}{Theorem}
\newtheorem{definition}{Definition}
\newtheorem{lemma}{Lemma}
\definecolor{white}{RGB}{255, 255, 255}
\definecolor{gray}{RGB}{240, 240, 240}
\definecolor{black}{RGB}{0, 0, 0}
\begin{document}

\preprint{10.48550/arXiv.2504.08456 }

\title{Generalization Bounds in Hybrid Quantum-Classical Machine Learning Models}

\author{Tongyan Wu}
\affiliation{%
Fraunhofer Institute for Cognitive Systems IKS,  Munich, Germany}%
\affiliation{Technical University of Munich, Germany}%Lines break automatically or can be forced with \\
\author{Amine Bentellis}%
\affiliation{%
Fraunhofer Institute for Cognitive Systems IKS,  Munich, Germany}%
\email{Second.Author@institution.edu}
\author{Alona Sakhnenko}%
\affiliation{%
Fraunhofer Institute for Cognitive Systems IKS,  Munich, Germany}%
\affiliation{Technical University of Munich, Germany}%Lines break automatically or can be forced with \\
\author{Jeanette Miriam Lorenz}%
\affiliation{%
Fraunhofer Institute for Cognitive Systems IKS,  Munich, Germany}%
\affiliation{Ludwig-Maximilian University, Munich, Germany}%Lines break automatically or can be forced with \\

%\collaboration{MUSO Collaboration}%\noaffiliation

\date{\today}% It is always \today, today,
             %  but any date may be explicitly specified

\begin{abstract}
Hybrid classical-quantum models aim to harness the strengths of both quantum computing and classical machine learning, but their practical potential remains poorly understood. In this work, we develop a unified mathematical framework for analyzing generalization in hybrid models, offering insight into how these systems learn from data. We establish a novel generalization bound of the form  $\tilde{\mathcal O}\left(
\tfrac{\alpha^{k}}{\sqrt{N}}\,
\big( k^{\tfrac{3}{2}}\sqrt{m n}\;+\;\sqrt{T\log T}\big)
\right)$ for \( N \) training data points, \( T \) trainable quantum gates,  \( n \) dimensional quantum circuit output,
and $k$ bounded linear layers \( \|F_i\|_F \leq \alpha\) where \( i = 1, \dots, k \) and $F_i \in \mathbb{R}^{m \cross n}$ interspersed with activation functions. This generalization bound decomposes into quantum and classical contributions, providing a theoretical framework to separate their influence and clarifying their interaction. Alongside the bound, we highlight conceptual limitations of applying classical statistical learning theory in the hybrid setting and suggest promising directions for future theoretical work.

\end{abstract}

%\keywords{Suggested keywords}%Use showkeys class option if keyword
                              %display desired
\maketitle

%\tableofcontents

\section{Introduction}
%%%% BACKGROUND + Significance
% What are Hybrid-QMLM?
Quantum-classical hybrid machine learning models have emerged as a promising approach in an era where fully fault-tolerant, large-scale quantum computers remain out of reach. Currently available devices impose significant restriction on width and depth of quantum circuits, making hybridization of quantum and classical computation an ideal candidate for exploiting early capabilities of quantum computing. In the current framework, classical optimization methods act as a complement for shallow-depth circuits. As the quantum technological stack matures into fault-tolerant regime, hybrid algorithms will evolve as well, allowing a careful interleaving of classical and quantum, leveraging the strength of both of these resources. %We are already witnessing some of these developments, as with the increase integration of quantum systems into high-performance computing (HPC) system, more complex hybrid algorithms are being developed. 
Early proof-of-concept hybrid algorithms show promising results \cite{Caro2022, Abbas_Power, Huang_2021, Zoufal_2019,  K_lis_2023, Sakhnenko_2022, qccnn, Gili_2023}, however, a solid theoretical understanding of the power of these architectures is lacking. Gaining such insights could shed light onto the differences in learning capacity  between quantum and classical algorithms.
%In a current setup, often a shallow quantum circuit is complemented by a classical optimization part.
%However, as the technological stack matures and an increasing number of quantum processing units (QPUs) are integrated into high performance computing frameworks, we can expect to see the emergence of more complex hybrid systems.

% What is a generalization error?
There are several theoretically interesting aspects of a learning model’s behavior, e.g. resources demand \cite{Caro2022}, sampling complexity \cite{Gili_2023} and learnability \cite{Sweke_2021}. One of the most sought-after metrics is the generalization, which allows to predict its performance beyond the available dataset. Classical approaches in statistical learning theory  establish uniform upper bounds on the generalization error based on both the training set size and the inherent complexity of the model. In this context, complexity can be characterized through measures, such as covering numbers and Rademacher complexity \cite{NIPS2017_b22b257a}, which capture the richness of a function class associated with the hypothesis class \footnote{A concept that represents a set of all possible functions that are representable by a chosen algorithm.} of the model. Several works have provided rigorous bounds on the generalization error for classical machine learning models \cite{he2021recentadvancesdeeplearning, Abbas2021, Zhang2021} as well as for fully quantum learning models \cite{Caro2022, Abbas_Power, Huang_2021}. However, despite the growing popularity of hybrid models, the conditions under which these hybrid architectures can generalize accurately remain largely unexplored. 

% This part is redundant
%In the hybrid quantum-classical setting, this hypothesis class becomes richer due to inherent potential complexity of integrating quantum with classical machine learning components. The added complexity makes the task of understanding generalization exciting, as it gives us an opportunity to learn how quantum components might influence or separate itself from the capacity of classical computers.  In this work, we bridge that gap by deriving generalization bounds for hybrid quantum-classical machine learning models. Our analysis quantifies how well these models can generalize from training data, thereby providing novel insights into the interplay between quantum and classical components in determining overall performance.

% What are QNNs? Especially QCCNN?
Inspired by the remarkable success of neural networks (NNs) in classical machine learning, the quantum machine learning community has pursued a similar line of research. Layer-wise organized quantum circuits with parametrizable gates that mimic classical feedforward architecture are known as quantum NNs (QNNs). A diverse zoo of models has emerged inspired by classical ideas, such as hybrid Bolzmann machines \cite{K_lis_2023}, hybrid autoencoders \cite{Sakhnenko_2022} and hybrid convolutional NNs \cite{qccnn}. QNNs fall under the category of quantum machine learning models (QMLM) \cite{Biamonte2017, Schuld2015}, which in general consists of a parametrized quantum circuit that is being trained on classical or quantum data.

% What are we doing in this paper? (+ maybe refinements on the proof)
In this paper, we derive and prove a generalization bound for hybrid (quantum-classical) machine learning models (hybrid QMLMs) that consist of general QMLMs combined with classical NNs.
%\( O\left( \sqrt{\frac{T\log{T}}{N}} + \frac{\alpha}{\sqrt{N}} \right) \) for a hybrid QMLM with \( N \) training data points, \( T \) trainable quantum gates, and bounded fully-connected layers \( \|F\| \leq \alpha \).
% What is the general idea of the proof?
In the hybrid quantum-classical setting, a model's hypothesis class becomes richer due to integrating quantum with classical machine learning components. In our proof, we first separately derive complexity measures for the hypothesis classes corresponding to the quantum and classical components. Due to the richness of the hybrid model's hypothesis class, we use covering numbers to quantify the complexity of the hypothesis class geometrically, rather than a direct worst-case analysis using Rademacher complexity. After establishing covering number bounds, we combine them to characterize the overall complexity of the hybrid hypothesis class. Using Dudley’s entropy integral, we obtain generalization bounds that quantify model's learning ability from finite data. A key implication of our results is that the complexity of a hybrid model that has minimal classical layers will perform closely to a full QMLM, providing theoretical support for the practical viability and design of quantum-assisted learning algorithms. As such, designing hybrid models with shallow classical components but expressive quantum circuits can achieve strong generalization while reducing implementation overhead, offering a principled framework for balancing quantum and classical resources in real-world QMLM architectures.

This paper is organized as follows: In Section~\ref{sec:related}, we justify the chosen QMLM architecture and provide an overview of previous work on generalization bounds in both quantum and classical contexts. In Section~\ref{sec:background}, we outline the fundamental concepts relevant to this work. In Section~\ref{sec:results}, we present a derived generalization bound for a hybrid quantum-classical setup and discuss its implications in Section~\ref{sec:discussion}. Finally, Section~\ref{sec:generalization_bound} contains a step-by-step derivation of the aforementioned bound.
%\clearpage
\section{Related work}\label{sec:related}

%start with the main paper
On the quantum side, \citet{Caro2022} prove generalization bounds for QMLMs and derive a bound for (full) quantum CNN (QCNN) based on covering numbers with $T$ trainable gates to scale at worst as $\Tilde{\mathcal{O}}\left(\sqrt{\frac{T \operatorname{log}T}{N}} \right)$. Furthermore, they show that when re-running the same gates at most $M$ times, the generalization performance scales $\Tilde{\mathcal{O}}\left(\sqrt{\frac{T \operatorname{log}MT}{N}} \right)$, only worsening logarithmically. These findings suggest that these models can generalize well even with a small training dataset. In our work, we extend these results to capture the behaviour of a hybrid architecture.

%why do we care for these hybrid architectures?
On the hybrid side, while the theoretical foundation is currently lacking (providing motivation for this work) numerous studies have demonstrated promising empirical results. For instance, \citet{qccnn} presented empirical evidence supporting the viability of a hybrid quantum-classical CNNs (QCCNNs) in a medical use case, demonstrating its potential practical utility in real-world scenarios. A subsequent study~\cite{10821430} studied theoretically derived generalization metrics for QMLMs, such as \cite{Abbas_Power, Schuld_2021}, applied to architecture from \cite{qccnn} in an empirical context;  however, these efforts revealed a significant lack of correlation between  theoretical metrics and empirical accuracy. This implies that the general metrics did not capture the behaviour of a hybrid model sufficiently well. In contrast, in this work, we take a fully theoretical approach and develop generalization bounds tailored to a hybrid setup.

%what classical models have to offer
On the classical side, extensive research has been conducted on generalization bounds for NNs. \citet{lin2019} offers a comprehensive overview of various approaches to these bounds and provides a simplification and generalization of them under specific assumptions. Notably, the bounds $\mathcal{O}\left( \frac{2^k \eta^k}{\sqrt{N}} \right)$ \cite{Neyshabur2015} and $\tilde{\mathcal{O}}\left( \frac{\alpha^{k - 1} k^{\frac{3}{2}} \eta \sqrt{d} }{\sqrt{N}} \right)$ \cite{NIPS2017_b22b257a, Neyshabur2017}, where $||F||_{\sigma} \leq \eta$, emphasize that generalization depends on how weight magnitudes (spectral norm of a weight matrix) scale with depth and the activation functions are assumed to be 1-Lipschitz. The bounds illustrate how generalization bounds can be established via norm-based complexity measures, particularly the product of per-layer norms and depth. These bounds are reflected in our analysis of the classical component of hybrid QMLMs. An important point to highlight is that our work prioritizes the investigation into interaction between the classical and quantum components rather than the sharpness of the bounds. Future research can focus on refining these bounds, as our study establishes a robust foundation that facilitates this process.

\section{\label{sec:background}Background}

\subsection{Generalization}
% loss function/gen error take from student presentation
Generalization is the ability of a (trained) model to perform well on unseen data. We measure the generalization error by comparing the model's performance on training data with its expected performance on unseen data. Formally, given a training dataset $S = \{(x_i ,y_i)\}, i=1, \dots N $ of size $N$ and the model's hypothesis class $H$, the empirical risk associated with a hypothesis $h \in H$ is given by

\begin{equation} \hat{R}_S(h) = \frac{1}{N} \sum_{i=1}^{N} \ell(h(x_i), y_i), \end{equation}
where $\ell$ is a predefined loss function that quantifies the discrepancy between the model prediction and the true label. This empirical risk, also known as the training error, serves as an estimate of the expected risk, which is defined as

\begin{equation} R(h) = \mathbb{E}_{(x,y) \sim \mathcal{D}} [\ell(h(x), y)], \end{equation}
where the expectation is taken over the unknown data distribution $\mathcal{D}$.  The discrepancy between these two quantities, known as the generalization error,

\begin{equation} \operatorname{gen}(h) = R(h) - \hat{R}_S(h), \end{equation}
measures how well the training performance translates to unseen data. Bounding this generalization error is crucial in statistical learning theory and is the focus of the subsequent theoretical results, where maintaining good generalization to new data while fitting the training data well is a trade-off.
\subsection{Complexity measures}
Complexity measures quantify how well functions from a hypothesis class are able to fit various training data, as overfitting can be understood as a class having many functions that are distinct from the target function but can still fit the training data well. 
The Rademacher complexity and covering numbers are complexity measures based on probability and geometric characterizations of hypothesis classes and have well-established connections to translate between them. Our approach builds on existing uniform-bound based measures in the quantum setting, where these have proven effective in bounding QMLM's \cite{Caro2022}.
%A simple complexity measure would be the cardinality of the hypothesis class. A standard result in PAC learning is the fact that finite hypothesis classes are learnable, a proof can be found in \cite{Shalev2014}.%\footnote{These complexity measures can be summarized under the notion of uniform convergence. In the finite hypothesis class case this leads to the complete characterization of PAC binary classification via the VC Dimension, named after the researchers \href{https://en.wikipedia.org/wiki/Vapnik\%E2\%80\%93Chervonenkis_theory}{\textit{Vladimir Vapnik}} and \textit{Alexey Chervonenkis}. Details can be found in most statistical learning theory textbooks, but are beyond the scope of the thesis.}
\subsubsection{Rademacher complexity}
\begin{definition}[Empirical Rademacher Complexity]\label{def:EmpRC}
 Consider arbitrary spaces $\mathcal{X}, \mathcal{Y}$, define $\mathcal{Z}:=\mathcal{X} \times \mathcal{Y}$.
  Given a training set $S = \{z_1, \dots, z_n\}$ and a real-valued hypothesis class $H$ on $\mathcal{Z}$, the empirical Rademacher complexity is defined as the expected supremum of the correlation between the functions in $H$ and the random Rademacher variables $\sigma_i$ applied to the dataset:

\begin{equation}
\hat{\mathcal{R}}_S(H) \coloneq \mathbb{E}_\sigma \left[ \sup_{h \in H} \frac{1}{n} \sum_{i=1}^n \sigma_i h(z_i) \right].
\end{equation}

 where the Rademacher variables $\sigma_i \sim U_{\{-1,1\}}$ are random variables that take values $\pm 1$ with equal probability. 
\end{definition}

%The empirical Rademacher complexity measures how well the functions in $H$ can ``fit'' the random noise $\sigma_i$ added to the data points $z_i$. A high ability to fit noise suggests potential overfitting, while lower complexity implies better generalization. 

%For example, a hypothesis class containing only one constant function is not a rich class at all. Fittingly, its empirical Rademacher complexity is 0, since $\mathbb{E}[\sigma] = 0$.

From the definition, we see that empirical Rademacher complexity varies with the training set, reflecting the actual learning difficulty of each dataset, explaining its name. This data-dependence can yield tighter generalization bounds than distribution-independent worst-case bounds, yet it can still be used to derive bounds that hold for all distributions. As such, Rademacher complexity often serves as a key step in proving general results in learning theory. The following theorem is fundamental in this context.

\begin{theorem}[Rademacher Generalization Bounds\cite{ABartlett1999}]\label{thm:Radbound} Consider arbitrary spaces $\mathcal{X}, \mathcal{Y}$. We define $\mathcal{Z}:=\mathcal{X} \times \mathcal{Y}$ and a real-valued hypothesis class $H$ on $\mathcal{Z}$. For any $\delta>0$ and any probability measure $\mathbb{P}$ on $\mathcal{Z}$ we have with probability at least $(1-\delta)$ for the training data $S \in \mathcal{Z}^n$ obtained by $n$-times repeated sampling w.r.t. $\mathbb{P}$:
 \begin{equation}  \operatorname{gen}(h) \leq 2 \hat{\mathcal{R}}_S(\ell\circ H)+c \sqrt{\frac{\log \frac{1}{\delta}}{n}}, \quad \forall h \in H,
 \end{equation} where $c > 0$ is some constant and $\ell\circ H:=\{(x, y) \mapsto \ell(h, y, x) \mid h \in H\}$ is the loss transformed hypothesis class. %\subseteq[0, c]^{\mathcal{Z}}$.
\end{theorem}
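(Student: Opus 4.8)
The plan is to reduce the one-sided uniform deviation $\Phi(S) := \sup_{h \in H}\big(R(h) - \hat R_S(h)\big)$ to the empirical Rademacher complexity of the loss-composed class $\ell\circ H$ via the classical symmetrization argument, and to control the two resulting fluctuation terms with a bounded-differences concentration inequality. First I would observe that, since $\ell$ is bounded (this boundedness is the implicit hypothesis that the unspecified constant $c$ absorbs), replacing a single sample $z_i$ in $S$ by any other point changes $\Phi(S)$ by at most $O(1/n)$; McDiarmid's inequality then yields, with probability at least $1 - \delta/2$,
\[
\Phi(S) \;\le\; \mathbb{E}_S\big[\Phi(S)\big] + c_1\sqrt{\tfrac{\log(2/\delta)}{n}}.
\]

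Next I would bound $\mathbb{E}_S[\Phi(S)]$ by $2\,\mathbb{E}_S\big[\hat{\mathcal{R}}_S(\ell\circ H)\big]$ via symmetrization: introducing an independent ``ghost'' sample $S' = \{z_1',\dots,z_n'\}$ drawn from the same $\mathbb{P}$, one writes $R(h) = \mathbb{E}_{S'}[\hat R_{S'}(h)]$, pulls the supremum inside the expectation to obtain $\mathbb{E}_S[\Phi(S)] \le \mathbb{E}_{S,S'}\big[\sup_{h} (\hat R_{S'}(h) - \hat R_S(h))\big]$, and then uses that $z_i$ and $z_i'$ are identically distributed to insert an independent Rademacher sign $\sigma_i$ in front of each difference $\ell(h,z_i') - \ell(h,z_i)$ without changing the joint law of $(z_i,z_i')_i$. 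Splitting the supremum over the two halves produces the factor $2$ and the empirical Rademacher complexity $\hat{\mathcal{R}}_S(\ell\circ H)$ of Definition~\ref{def:EmpRC}, averaged over $S$.

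Finally I would strip the remaining expectation $\mathbb{E}_S$ in front of $\hat{\mathcal{R}}_S(\ell\circ H)$ with a second application of McDiarmid's inequality — the map $S \mapsto \hat{\mathcal{R}}_S(\ell\circ H)$ also has bounded differences of order $1/n$ — giving, with probability at least $1-\delta/2$,
\[
\mathbb{E}_S\big[\hat{\mathcal{R}}_S(\ell\circ H)\big] \;\le\; \hat{\mathcal{R}}_S(\ell\circ H) + c_2\sqrt{\tfrac{\log(2/\delta)}{n}}.
\]
A union bound over the two failure events and a relabelling of the constant (absorbing $c_1$, $c_2$ and the range of $\ell$ into a single $c$) then delivers the stated inequality, using that $\operatorname{gen}(h) = R(h) - \hat R_S(h) \le \Phi(S)$ simultaneously for every $h \in H$.

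I expect the symmetrization step to be the conceptual heart of the argument: the care needed is in justifying that inserting the Rademacher signs leaves the distribution of $(z_i, z_i')_{i=1}^n$ unchanged, and that the supremum over $H$ can be decoupled across the two samples at a cost of only a factor of two. The two concentration steps are routine once the range of $\ell$ is fixed — which is exactly the quantity the constant $c$ hides — so no delicate estimates beyond bookkeeping of constants arise there. Note that no Lipschitz/contraction property of $\ell$ is required, since the bound is phrased directly in terms of $\ell\circ H$ rather than $H$.
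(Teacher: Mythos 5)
The paper does not prove this theorem; it is imported verbatim as a background result with a citation to \cite{ABartlett1999}. Your argument --- McDiarmid on the uniform deviation $\Phi(S)$, symmetrization with a ghost sample and Rademacher signs to get the factor $2\,\mathbb{E}_S[\hat{\mathcal{R}}_S(\ell\circ H)]$, and a second McDiarmid to replace the expected by the empirical Rademacher complexity --- is exactly the standard proof underlying the cited result, and it is correct, including your observation that boundedness of $\ell$ is the implicit hypothesis absorbed into the constant $c$.
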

This theorem allows us to bound the generalization error using Rademacher complexities.
\subsubsection{Covering Number}
%Dudley's theorem, which establishes a connection between covering numbers and Rademacher complexity, showing how both measures relate to controlling generalization error in learning tasks. This is also the argument used in the main section of the thesis to prove the generalization bound for hybrid--quantum machine learning models.
We begin by defining the $\varepsilon$-cover and the covering number.
%, along with their closely related concepts the $\varepsilon$-packing and the packing number.

\begin{definition}( $\varepsilon$-cover and covering number \cite{Wolf2023}) Let $(V,\|\cdot\|)$ be a normed space, and let $U \subseteq V$. Then $U$ is an $\varepsilon$-cover of $\,V$ if $\, \forall v \in V$, there exists $u \in U$ such that $\|u-v\| \leq \varepsilon$. The covering number of the normed space $(V,\|\cdot\|)$ with any $\varepsilon>0$ is the size of the smallest $\varepsilon$-cover. \begin{equation}
  \mathcal{N}(V, \|\cdot\|, \varepsilon) \coloneq \min \{|U|: U \text{ is an } \varepsilon\text{-cover of } V\}. \end{equation}
We call the logarithm transformed covering number $\log(\mathcal{N}(V, \|\cdot\|, \varepsilon) )$ the metric entropy of $V$.\end{definition} 

%\begin{definition}( $\varepsilon$-packing and packing number \cite{Wolf2023}) Let $(V,\|\cdot\|)$ be a normed space, and let $U \subseteq V$. Then $U$ is an $\varepsilon$-packing of $\,V$ if for any $u, u^{\prime} \in U$, the inequality $\left\|u-u^{\prime}\right\|>\varepsilon$ holds. The packing number of the normed space $(U,\|\cdot\|)$ with any $\varepsilon>0$ is the size of the largest $\varepsilon$-packing.
%  \begin{equation}
% \mathcal{M}(V, \|\cdot\|, \varepsilon) \coloneq \max \{|U|: U \text{ is an } \varepsilon\text{-packing of } V\}. \end{equation}
% \end{definition} 
The larger the covering number, the more complex the hypothesis class, because it indicates that more functions are needed to approximate the entire class within a small error tolerance. Intuitively, if a hypothesis class $H$ has a large covering number for small $\varepsilon$, it means the class is complex, as there are many distinct functions that cannot be closely approximated by one another. On the other hand, if the covering number is small, the class is simpler, with many hypotheses being similar to each other within the margin of error $\varepsilon$.
There is a common argument chain to arrive at generalization bounds via Thm.~\ref{thm:Radbound} and an entropy integral (developed by R.M. Dudley\cite{dudley}) once a covering number has been determined for a hypothesis class.
\begin{theorem}[Dudley's Entropy Integral \cite{Shalev2014}]\label{thm:Dudley}
 %For a fixed training set $S$ consisting of $n$ data points in $\mathcal{X}$, 
 
 Let $H$ be a hypothesis class on $\mathcal{X}$ equipped with a norm $\norm{\cdot}$ and let $\gamma_0:=\sup _{h \in H}\|h\|$. We can then bound the empirical Rademacher complexity by Dudley's entropy integral: The usage of 
 
 %\underset_{ \alpha \in [0, \frac{\gamma_0}{2}) }
 $$
 \hat{\mathcal{R}}_S(H) \leq \underset{ \alpha \in [0, \frac{\gamma_0}{2})}{\operatorname{inf}} 4\alpha +  \frac{12}{\sqrt{n}} \int_\alpha^{\gamma_0} \sqrt{\log \mathcal{N}( H, \|\cdot\|,\varepsilon)} d \varepsilon.
 $$
\end{theorem}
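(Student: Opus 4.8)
The plan is to prove this by the classical \emph{chaining} argument of Dudley: approximate every $h\in H$ simultaneously at a dyadic sequence of resolutions by members of minimal $\varepsilon$-covers, telescope the approximants, and control each resulting increment with a union bound over a finite class. Here $\|\cdot\|$ is the empirical $L_2$ norm on the sample $S$, $\|g\|^2=\tfrac1n\sum_{i=1}^n g(z_i)^2$, and the only structural fact I will use is the Cauchy--Schwarz consequence $\tfrac1n\sum_i\sigma_i g(z_i)\le\|g\|$ (valid since $\sigma_i^2=1$). Because $\hat{\mathcal R}_S(H)\le\hat{\mathcal R}_S(H\cup\{0\})$, I may assume $0\in H$ — this leaves $\gamma_0$ unchanged and inflates every covering number by at most one — so that the coarsest cover below is the singleton $\{0\}$.

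\emph{Dyadic nets and telescoping.} Put $\varepsilon_j:=2^{-j}\gamma_0$ for $j=0,1,2,\dots$ and let $C_j\subseteq H$ be a minimal $\varepsilon_j$-cover, so $|C_j|=\mathcal N(H,\|\cdot\|,\varepsilon_j)$ and $C_0=\{0\}$. For $h\in H$ let $\pi_j(h)\in C_j$ be a nearest element, so $\|h-\pi_j(h)\|\le\varepsilon_j$. For an integer $J\ge1$ I will start from the identity
\begin{equation*}
h=\bigl(h-\pi_J(h)\bigr)+\sum_{j=1}^{J}\bigl(\pi_j(h)-\pi_{j-1}(h)\bigr)+\pi_0(h),\qquad\text{with }\pi_0(h)=0,
\end{equation*}
apply $h\mapsto\tfrac1n\sum_i\sigma_i h(z_i)$, take $\sup_{h\in H}$, and then $\mathbb E_\sigma$, using subadditivity of the supremum to split into a residual term and $J$ link terms. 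The residual is bounded pointwise in $\sigma$ by $\|h-\pi_J(h)\|\le\varepsilon_J$. For the $j$-th link, the family $\{\pi_j(h)-\pi_{j-1}(h):h\in H\}$ has at most $|C_j|\,|C_{j-1}|\le\mathcal N(H,\|\cdot\|,\varepsilon_j)^2$ elements, each of norm at most $\varepsilon_j+\varepsilon_{j-1}=3\varepsilon_j$, so Massart's finite-class lemma gives
\begin{equation*}
\mathbb E_\sigma\sup_{h\in H}\frac1n\sum_i\sigma_i\bigl(\pi_j(h)-\pi_{j-1}(h)\bigr)(z_i)\le 3\varepsilon_j\sqrt{\frac{2\log\mathcal N(H,\|\cdot\|,\varepsilon_j)^2}{n}}=\frac{6\varepsilon_j}{\sqrt n}\sqrt{\log\mathcal N(H,\|\cdot\|,\varepsilon_j)}.
\end{equation*}
Summing over $j=1,\dots,J$ gives $\hat{\mathcal R}_S(H)\le\varepsilon_J+\tfrac{6}{\sqrt n}\sum_{j=1}^{J}\varepsilon_j\sqrt{\log\mathcal N(H,\|\cdot\|,\varepsilon_j)}$.

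\emph{From the sum to the integral.} Since $\varepsilon\mapsto\mathcal N(H,\|\cdot\|,\varepsilon)$ is non-increasing and $\varepsilon_j-\varepsilon_{j+1}=\tfrac12\varepsilon_j$, each summand is dominated by an integral over the next dyadic block, $\varepsilon_j\sqrt{\log\mathcal N(H,\|\cdot\|,\varepsilon_j)}\le 2\int_{\varepsilon_{j+1}}^{\varepsilon_j}\sqrt{\log\mathcal N(H,\|\cdot\|,\varepsilon)}\,d\varepsilon$, and summing telescopes this into $2\int_{\varepsilon_{J+1}}^{\gamma_0}\sqrt{\log\mathcal N(H,\|\cdot\|,\varepsilon)}\,d\varepsilon$. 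Given a target $\alpha\in[0,\tfrac{\gamma_0}{2})$, I then pick $J$ with $2\alpha\le\varepsilon_J\le4\alpha$ — possible since consecutive $\varepsilon_j$ are in ratio $2$ and $2\alpha\le\gamma_0$ — so that the residual is at most $4\alpha$ while $\varepsilon_{J+1}=\tfrac12\varepsilon_J\ge\alpha$ lets me enlarge the integration range to $[\alpha,\gamma_0]$. This yields $\hat{\mathcal R}_S(H)\le 4\alpha+\tfrac{12}{\sqrt n}\int_\alpha^{\gamma_0}\sqrt{\log\mathcal N(H,\|\cdot\|,\varepsilon)}\,d\varepsilon$, and taking the infimum over $\alpha$ is exactly the asserted bound.

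Being a classical result, the argument has no real conceptual obstacle; the effort is bookkeeping. The two points I would treat most carefully are (i) the coarsest scale — the reduction to $0\in H$ so that $C_0$ is a singleton and the $\pi_0(h)$ contribution vanishes in expectation, which is precisely where the hypothesis $\gamma_0=\sup_{h\in H}\|h\|$ is used — and (ii) shepherding the multiplicative constants and index shifts through the telescoping and the sum-to-integral comparison so that they collapse exactly to the stated $4\alpha$ and $\tfrac{12}{\sqrt n}$; the analytic inputs, Cauchy--Schwarz for the residual and Massart's lemma for the links, are invoked as black boxes.
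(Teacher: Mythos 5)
The paper does not prove this statement: Theorem~\ref{thm:Dudley} is imported from the cited reference and used as a black box (in Theorem~\ref{thm:dudley-hybrid} and Figure~\ref{fig:proof_structure}), so there is no in-paper proof to compare against. Your chaining argument --- dyadic nets $\varepsilon_j = 2^{-j}\gamma_0$, telescoping through nearest cover points, Cauchy--Schwarz for the residual, Massart's finite-class lemma for the links, and the sum-to-integral comparison with a choice of $J$ satisfying $2\alpha \le \varepsilon_J \le 4\alpha$ --- is the standard proof of exactly this refined form of Dudley's bound, and the constants $4\alpha$ and $12/\sqrt{n}$ come out as you compute them. The only cosmetic slack is that adjoining $0$ to $H$ so the coarsest cover is a singleton replaces $\mathcal{N}$ by $\mathcal{N}+1$ inside the logarithm; this is avoided by letting the scale-$\gamma_0$ cover be external to $H$, or by noting that the fixed $\pi_0$ term has zero Rademacher expectation on its own, and it does not affect the validity of the approach.
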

The integral can be understood as summing the contributions of each infinitesimal resolution to the Rademacher complexity of the hypothesis class. At each level of resolution, the function class will become better at fitting the Rademacher random variables.

%%%%%% section Result
\section{Results}\label{sec:results}
With this work, we lay the foundation for more theoretically grounded understanding of the capability of hybrid quantum-classical models. We derive and prove a generalization bound $
\tilde{\mathcal O}\!\left(
\frac{\alpha^{k}}{\sqrt{N}}\,
\big( k^{\tfrac{3}{2}}\sqrt{m n}\;+\;\sqrt{T\log T}\big)
\right)$ for \( N \) training data points, \( T \) trainable quantum gates, \( n \) dimensional quantum circuit output,
and bounded fully-connected $k$ layers \( \|F_i\|_F \leq \alpha\) where \( i = 1 \dots k, F \in \mathbb{R}^{m\cross n} \). The bound can be decomposed into classical and quantum components, which enables investigation of their interaction. Just like the bounds presented by \citet{lin2019} on the classical side, we assumed 1-Lipschitz acivations functions when presenting our bound, however the derivation in Appendix~\ref{sec:proof} includes the general case for $L$-Lipschitz activation functions.

Our analysis leverages concepts from statistical learning theory and quantum information, allowing us to theoretically predict the empirical performance of hybrid models. The derived generalization bound indicates how the model's complexity, dictated by the number of quantum gates and the characteristics of the classical layers, influences its ability to generalize from the training data to unseen examples.
A key implication of our result is that introducing classical layers into the architecture does not introduce significant generalization disadvantage compared to a fully quantum model with equivalent quantum capacity. In fact, the bound suggests that a well-designed hybrid model with a moderate number of classical layers can match the generalization capabilities of a purely quantum model while offering practical advantages such as reduced quantum circuit depth and better noise tolerance. 
On the other side, classical NNs typically exhibit generalization bounds that scale unfavorably with the number of layers, often exponentially. In contrast, the hybrid model's quantum component contributes a polylogarithmic factor, potentially offering more favorable scaling under certain conditions.
This provides a theoretical justification for hybridization as a strategy for improving scalability without sacrificing learning performance. 
\section{Discussion}
\label{sec:discussion}

%Implications for QCCNNs
%For $k = 1$ (i.e., a single fully connected layer), 

%Comparison with Classical Models
% Moved to results

%Limitations
While the theoretical bounds investigated in this work provide valuable insights, they have been shown to struggle in the overparameterized regime. Recent works, such as \cite{Zhang2021} and its quantum variant \cite{Eisert2024}, argue that traditional generalization bounds fall short of explaining the empirical success despite high complexities of modern machine learning models. To bridge the gap between theoretical bounds and empirical observations, future research should explore new paradigms of generalization that synergize current understanding of generalization based on model complexities with emerging evidence from edge cases. This approach will foster a more comprehensive understanding of generalization in diverse scenarios. Recent approach \cite{Canatar_2021} has demonstrated strong empirical support for theoretically established bounds in an overparameterized regime. The authors leveraged the relationship between kernel methods and deep learning via Neural Tangent Kernels, validating the idea presented in \cite{belkin2018understanddeeplearningneed} that comprehending generalization in deep learning necessitates understanding generalization in kernel methods. This suggest that future research into generalization bounds for NNs can benefit from a synergistic investigation into their more foundational cousin - the kernel methods. Similar connections between models has been identified in quantum case~\cite{ QuantumTangent, Incudini_2023}, suggesting that this line of research might shed light on generalization bounds of QNNs as well.

The classical part of the bound presented in our work suffers from the same problems mentioned above. While these bounds align with earlier covering-number results, such as those in \cite{ABartlett1999}, they are looser than the more recent bounds found in \cite{Neyshabur2015} and \cite{Neyshabur2017}. However, our primary contribution is the separation of the classical and quantum parts by their covering numbers. The submultiplicative property is applicable to any chosen covering number for each component under appropriate norms. Our approach prioritizes a methodological perspective over the tightness offered by modern results.

Apart from understanding the generalization abilities of quantum models, it is crucial to identify the optimal context (use-case) for utilizing these models. Recent research  \cite{bermejo2024quantumconvolutionalneuralnetworks} has shown that QCNNs can be efficiently simulated on classical devices though a process known as dequantization. This is particularly efficient on easy dataset that are usually used to demonstrate the utility of proof-of-concept implementations. These results can be extended to a broader QNN architecture. In pursuit of identifying a ``killer application" for QMLMs, it is essential to evaluate both the architecture and its generalization capabilities, alongside the complexity of the specific use case at hand. By integrating the tools developed in the present work with insight from \cite{bermejo2024quantumconvolutionalneuralnetworks}, we bring us one step closer to understand generalization benefits of hybrid QMLMs.
% Future work:
%- leverage deeper connections between quantum information theory and learning theory. 

Our study has some limitations, which should be addressed in the future work. Firstly, we focused on a purely theoretical derivation of the bound, and empirical validation remains to be conducted. However, the work \cite{Caro2022} which is foundational for our study has already performed empirical verification of their bound, suggesting that our results should exhibit similar behavior. Secondly, we consider a hybrid architecture comprising a quantum layer followed by classical layers. While the derived generalization bound provides insight into the learning behavior of the hybrid quantum-classical model, it does not directly answer how to determine the optimal quantum-to-classsical ratio of layers. The bound primarily consists of the insight provided by \cite{Caro2022} where the generalization bound on the QMLM scales less than exponentially in the training data and the generalization bounds of NNs. However, since both these bounds are loose \cite{Eisert2024, Zhang2021}, they fail to capture the interplay between large circuits and deep classical layers.

\section{Methods}\label{sec:generalization_bound}
In this section, we present the theoretical methods that lead to the generalization bounds for QMLMs by combining bounds from quantum (see Section~\ref{sec:quantum_part}) and classical (see Section~\ref{sec:classical_part}) parts as illustrated in \cref{fig:proof_structure}. For a hybrid QMLM and a sufficiently large training data set, our bounds guarantee accurate generalization that is close to the performance of a fully QMLM with high probability on unseen data.

\begin{figure*}
    \begin{tikzpicture}

    \node[draw, draw opacity=0, rotate=90, text opacity=0.3] at (-0.5,4.9) {\textit{Classical}};
    \node[draw, draw opacity=0, rotate=90, text opacity=0.3] at (-0.5,3.05) {\textit{Quantum}};
    
    \draw [thick,dashed, draw](-0.75,4) -- (12,4);

    \node[draw, draw opacity=0, align=center, minimum width=4.4cm] at (2,5.75) {\small \cref{lem:metricnn}\\ \textit{Metric entropy for}\\[-0.2em] \textit{bounds for $k$-layer NN}};
    \node[draw, draw opacity=0, align=center, minimum width=4.1cm, fill=white] at (2,4) {\small \cref{lem:mainresult}\\ \textit{Submultiplicativity of}\\[-0.2em]\textit{hybrid covering numbers}};
    \node[draw, draw opacity=0, align=center, minimum width=4.4cm] at (2,2.25) {\small \cref{lem:covunit}\\ \textit{Metric entropy for}\\[-0.2em] \textit{bounds for QMLM}};

    \node[draw, draw opacity=0, align=center] at (5.4,3.2) {\small \cref{thm:Dudley}\\ \textit{Dudley's entropy}\\[-0.2em]\textit{integral}};

    \node[draw, draw opacity=0, align=center] at (5.4,4.5) {\small \cref{thm:Radbound} \\ \textit{Rademacher Bound}};

    \node[draw, draw opacity=0, align=center, minimum width=4.1cm, fill=gray, minimum height=2.5cm] at (9,4) {\small \cref{thm:main}\\ \textbf{Generalization bounds}\\[-0.2em]\textbf{for hybrid QMLM}};
 
    \draw [-stealth, draw](2,2.8) -- (2,3.4);
    \draw [-stealth, draw](2,5.2) -- (2,4.6);
    \draw [-stealth, draw](4,3.9) -- (6.9,3.9);

    \end{tikzpicture}
    \caption{Visualization of the proof structure: We present metric entropy/covering number bounds for both quantum and classical components, demonstrating that the hybrid metric entropy bound can be expressed through both. We then utilize this framework to derive our generalization bound via Dudley's entropy integral and Rademacher bounds.}
    \label{fig:proof_structure}

\end{figure*}

\subsection{Classical Part: Covering Numbers in Neural Networks}
\label{sec:classical_part}

For the classical part, we introduce the metric entropy bound following the proofs in \cite{ABartlett1999}, which derived generalization bounds for NNs using covering numbers.

Intuitively, a covering number represents the smallest number of covers needed, based on a chosen radius or resolution, to cover a given space. In the context of function spaces, this concept characterizes how uniformly any two functions can vary from each other. When considering bounded linear functions with a bound $\alpha$, we can visualize the space as being enclosed within a sphere of radius $\alpha$. Lemma~\ref{lem:metricblf} extends this intuition by deriving the covering number, or more precisely, the metric entropy associated with this space. Realizing that a classical NN consists of linear layers and activation functions, we can write down the hypothesis class of a NN with $k-$layers and extend Lemma ~\ref{lem:entropyfc} to derive a NN's metric entropy. The idea is to cover the changes in each layer through coverings of each linear layer while dealing with the interspersed activation function. However, separating out the $L$-Lipschitz activation functions will pick up a factor $L$ each layer together with the operator bound. The change to the uniform output metric reflects the datum dependent formulation of Thm.~\ref{thm:Dudley}, since we are formulating the generalization bound over possible data input $x$. The proof can be found in the Appendix ~\ref{sec:proof}.

\begin{lemma}[Metric Entropy for classical $k$-Layer Neural Networks]\label{lem:metricnn}
Let $\mathcal X = \{x \in \mathbb{R}^n : \|x\|_{\ell_2} \le R\}$ be the input domain. 
Let $\sigma:\mathbb{R}\to\mathbb{R}$ be an $L$-Lipschitz activation function applied coordinate-wise, and consider the class of $k$-layer networks
$
\mathcal{C} = \{f(x)=F_k\,\sigma\!\big(F_{k-1}\,\sigma(\cdots \sigma(F_1 x)\big),
\; \|F_i\|_F \le \alpha,\; F_i \in \mathbb{R}^{m \times n} \}.
$
Then, for any $\varepsilon > 0$, the covering number of $\mathcal{C}$ under the data dependent 
$\ell_2$ metric
$$
\|f-g\|_{X, \ell_2}
\;:=\;
\left( \frac{1}{n} \sum_{j=1}^n \|f(x_j)-g(x_j)\|_2^2 \right)^{1/2}
$$
satisfies
$$
\log \mathcal N\!\left(\mathcal{C}, \|\cdot\|_{X, \ell_2}, \varepsilon\right)
\;\le\;
k\,m n \cdot \log\!\left(\tfrac{3\,k\,R\,L^{\,k-1}\,\alpha^{\,k}}{\varepsilon}\right),
$$
for $\varepsilon \le k\,R\,L^{\,k-1}\alpha^k$, and equals $0$ for larger $\varepsilon$.
\end{lemma}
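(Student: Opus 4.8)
The plan is to build the cover layer by layer, tracking how errors propagate through the alternating linear maps and $L$-Lipschitz activations, and then count parameters via a covering of the Frobenius ball. First I would cover the space of admissible weight matrices: the set $\{F \in \mathbb{R}^{m\times n} : \|F\|_F \le \alpha\}$ is a Euclidean ball of radius $\alpha$ in $mn$ dimensions, so for any $\delta>0$ it admits a $\delta$-cover (in Frobenius norm) of size at most $(3\alpha/\delta)^{mn}$ by the standard volumetric bound. Taking one such cover per layer and forming the product gives a candidate cover of $\mathcal C$ of size $(3\alpha/\delta)^{kmn}$, i.e. metric entropy at most $k\,mn\,\log(3\alpha/\delta)$; the remaining work is to choose $\delta$ as a function of $\varepsilon$ so that weight-closeness implies output-closeness in the data-dependent $\ell_2$ metric.

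Next I would run the error-propagation estimate. Fix an input $x$ with $\|x\|_{\ell_2}\le R$ and two networks $f,g$ whose weight matrices $F_i,G_i$ satisfy $\|F_i-G_i\|_F\le\delta$. Writing $a_i, b_i$ for the pre-activation vectors at layer $i$ of $f$ and $g$ respectively, I would show by induction that $\|a_i - b_i\|_2$ is controlled: each layer contributes a term of the form (difference in this layer's weights applied to the current activation) plus (this layer's weight times the propagated difference from below), and $\|\sigma(a)-\sigma(b)\|_2\le L\|a-b\|_2$ since $\sigma$ is $L$-Lipschitz coordinate-wise. Using $\|F\|_{\sigma}\le\|F\|_F\le\alpha$ to bound operator norms and $\|\sigma(a_i)\|_2 \le \|\sigma(a_i)-\sigma(0)\|_2 + \|\sigma(0)\|_2$ — or, in the cleanest version where one takes $\sigma(0)=0$, simply $\|\sigma(a_i)\|_2\le L\|a_i\|_2$ — one gets that the norm of the activation entering layer $i$ is at most $R\,L^{\,i-1}\alpha^{\,i-1}$, and hence the accumulated output error satisfies a bound of the shape $\|f(x)-g(x)\|_2 \le \delta \sum_{i=1}^{k} R\,L^{\,k-1}\alpha^{\,k-1} = k\,R\,L^{\,k-1}\alpha^{\,k-1}\,\delta$. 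Since this holds pointwise in $x$, it holds for the data-dependent metric $\|f-g\|_{X,\ell_2}$ as well (that metric is an average of squared pointwise norms, so it is dominated by the pointwise sup bound).

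Then I would close the argument: to make $\|f-g\|_{X,\ell_2}\le\varepsilon$ it suffices to take $\delta = \varepsilon/(k\,R\,L^{\,k-1}\alpha^{\,k-1})$, so the product cover has log-cardinality at most
$$
k\,mn\,\log\!\left(\frac{3\alpha}{\delta}\right)
= k\,mn\,\log\!\left(\frac{3\,k\,R\,L^{\,k-1}\alpha^{\,k}}{\varepsilon}\right),
$$
which is exactly the claimed bound. For $\varepsilon \ge k\,R\,L^{\,k-1}\alpha^k$ the whole class is contained in a single $\varepsilon$-ball around the zero function (since $\|f\|_{X,\ell_2}\le k\,R\,L^{\,k-1}\alpha^k$ by the same pointwise estimate applied with $g\equiv 0$, or more directly $\|f(x)\|_2 \le R L^{k-1}\alpha^k$), so the covering number is $1$ and its log is $0$.

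I expect the main obstacle to be the bookkeeping in the layerwise induction — specifically, getting the exponents of $L$ and $\alpha$ to match the statement and handling the contribution of $\sigma(0)$ cleanly (the assumption $\sigma(0)=0$, implicit in taking $\sigma$ $L$-Lipschitz with the stated normalization, removes an additive nuisance term; if one does not assume it, the norm of the activations grows with an extra geometric series in $L\alpha$ and the constant inside the logarithm changes, so I would either state $\sigma(0)=0$ explicitly or absorb the difference into the constant). A secondary subtlety is justifying that covering the Frobenius ball (rather than, say, the operator-norm ball) is the right choice: it is what makes the per-layer count $mn\log(\cdot)$ and is consistent with the hypothesis $\|F_i\|_F\le\alpha$, but one must be careful that the operator-norm bounds used in the propagation step are legitimate consequences of $\|F_i\|_F\le\alpha$, which they are since $\|\cdot\|_\sigma\le\|\cdot\|_F$.
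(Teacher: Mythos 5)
Your proposal is correct and follows essentially the same route as the paper's proof: a volumetric $(3\alpha/\delta)^{mn}$ cover of each layer's Frobenius ball, a layerwise telescoping error-propagation estimate of the form $\|f(x)-g(x)\|_{\ell_2}\le k\,R\,(L\alpha)^{k-1}\delta$, and the choice $\delta=\varepsilon/(k\,R\,L^{k-1}\alpha^{k-1})$ to convert the product cover into the stated entropy bound. Your caveat that the activation-growth estimate $\|a_{i-1}(x)\|_{\ell_2}\le (L\alpha)^{i-1}R$ requires $\sigma(0)=0$ is well taken --- the paper's proof relies on the same implicit normalization without stating it.
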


With the metric entropy for the $k$-layer NN, we can derive the generalization bound of the classical part. Inserting Lemma~\ref{lem:metricnn} into Thm.~\ref{thm:Dudley} we obtain a bound on the empirical Rademacher complexity of the hypothesis class. This induces the generalization bound using Thm.~\ref{thm:Radbound}. Hence,

\begin{theorem}[Generalization Bound for $k$-layer Neural Networks]
\label{thm:genNN} 
Let $\sigma:\mathbb{R}\to\mathbb{R}$ be an $L$-Lipschitz activation applied coordinatewise, and consider the class of $k$-layer networks
$
\mathcal{C} = \{f(x) = F_k\,\sigma\!\big(F_{k-1}\,\sigma(\cdots \sigma(F_1 x)\big),
\; \|F_i\|_F \le \alpha,\; F_i \in \mathbb{R}^{m \times n} \}.
$

Suppose the loss $\ell$ is coordinate-wise $L$-Lipschitz and bounded by $M$.
Then with probability at least $1-\delta$ over the choice of i.i.d. training data $S$ of size $N$ according to $\mathbb{P}$,

\begin{equation}
|R(f) - \hat{R}_S(f)| \; \in \mathcal{O}\left(\, L^{k}\,\alpha^{k}\, k^{\tfrac{3}{2}} \,\sqrt{\tfrac{mn}{N}} + \sqrt{\tfrac{\log (1 / \delta)}{N}}\right).
\end{equation}

\end{theorem}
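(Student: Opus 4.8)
The plan is to chain the three results already assembled in the excerpt: \cref{lem:metricnn} for the metric entropy of the $k$-layer network class $\mathcal C$, \cref{thm:Dudley} to convert that entropy bound into a bound on the empirical Rademacher complexity, and \cref{thm:Radbound} to pass from Rademacher complexity to the generalization gap. First I would record the relevant scale parameter: by \cref{lem:metricnn} every $f\in\mathcal C$ satisfies $\|f\|_{X,\ell_2}\le \gamma_0 := k\,R\,L^{k-1}\alpha^k$ (take $g=0$, which lies in the closure of the class, or equivalently note the covering number vanishes for $\varepsilon>\gamma_0$), so this is the natural upper limit of Dudley's integral. Then I would plug the bound $\log\mathcal N(\mathcal C,\|\cdot\|_{X,\ell_2},\varepsilon)\le k\,mn\,\log(3k R L^{k-1}\alpha^k/\varepsilon)$ into
\[
\hat{\mathcal R}_S(\mathcal C)\;\le\;\inf_{a\in[0,\gamma_0/2)}\ 4a+\frac{12}{\sqrt N}\int_a^{\gamma_0}\sqrt{k\,mn\,\log(3\gamma_0/\varepsilon)}\,d\varepsilon .
\]

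The main computational step is estimating the Dudley integral. I would pull the constant $\sqrt{k\,mn}$ out and bound $\int_a^{\gamma_0}\sqrt{\log(3\gamma_0/\varepsilon)}\,d\varepsilon$ by substituting $\varepsilon=\gamma_0 u$, giving $\gamma_0\int_{a/\gamma_0}^{1}\sqrt{\log(3/u)}\,du\le \gamma_0\int_0^1\sqrt{\log(3/u)}\,du$, a finite absolute constant; taking $a\to 0$ (permissible here since the integrand is integrable at $0$) kills the $4a$ term. This yields $\hat{\mathcal R}_S(\mathcal C)\in\mathcal O\!\big(\gamma_0\sqrt{k\,mn}/\sqrt N\big)=\mathcal O\!\big(R\,L^{k-1}\alpha^k k^{3/2}\sqrt{mn}/\sqrt N\big)$. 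Next I would handle the loss-composed class $\ell\circ\mathcal C$: since $\ell$ is coordinate-wise $L$-Lipschitz, the Lipschitz (Talagrand) contraction inequality for Rademacher complexity gives $\hat{\mathcal R}_S(\ell\circ\mathcal C)\le L\,\hat{\mathcal R}_S(\mathcal C)$, absorbing one more factor of $L$ into the prefactor so the leading term becomes $L^{k}\alpha^k k^{3/2}\sqrt{mn/N}$ as stated. Finally, \cref{thm:Radbound} with the boundedness $\ell\le M$ supplies the additive term $c\sqrt{\log(1/\delta)/N}$, completing the claimed $\mathcal O\!\big(L^{k}\alpha^{k}k^{3/2}\sqrt{mn/N}+\sqrt{\log(1/\delta)/N}\big)$ bound, uniformly over $f\in\mathcal C$ with probability at least $1-\delta$.

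The step I expect to be the genuine obstacle is the Dudley integral estimate, specifically justifying that the lower endpoint can be sent to $0$ and controlling the resulting improper integral $\int_0^{\gamma_0}\sqrt{\log(3\gamma_0/\varepsilon)}\,d\varepsilon$ cleanly — one must verify this converges (it does, since $\sqrt{\log(1/\varepsilon)}$ is integrable near $0$) and that the bound is dominated by $\gamma_0$ times an absolute constant, rather than picking up an extra $\sqrt{\log(1/\varepsilon)}$-type factor; a slightly more careful route is to keep $a$ of order $\gamma_0/\sqrt N$ and show the $4a$ contribution is then lower-order. A secondary subtlety is bookkeeping the powers of $L$: \cref{lem:metricnn} already contributes $L^{k-1}$ through $\gamma_0$, the contraction lemma contributes one more, and one should double-check no additional $L$ sneaks in from the data-dependent norm normalization; once these are tracked the exponent $k$ on $L$ in the final statement follows. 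Everything else — the value of $\gamma_0$, the contraction inequality, and the final invocation of \cref{thm:Radbound} — is routine, so I would present those compactly and spend the bulk of the proof on the integral bound.
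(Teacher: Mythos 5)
Your proposal is correct and follows essentially the same route as the paper: plugging the metric entropy of \cref{lem:metricnn} into Dudley's entropy integral (\cref{thm:Dudley}), bounding the resulting $\int_0^{\gamma_0}\sqrt{\log(3\gamma_0/\varepsilon)}\,d\varepsilon$ by an absolute constant times $\gamma_0$ (the paper does this via an incomplete-gamma/erf evaluation in the appendix, you via a direct substitution — both give the same asymptotics), then applying Talagrand--Ledoux contraction for the extra factor of $L$ and \cref{thm:Radbound} for the confidence term. Your bookkeeping of $\gamma_0=kRL^{k-1}\alpha^k$, the $k^{3/2}$ from $\gamma_0\sqrt{kmn}$, and the $L^k$ exponent matches the paper's derivation.
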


\subsection{Quantum Part: Generalization Bound in QMLMs}
\label{sec:quantum_part}
We introduce the generalization bound for QMLM derived from the covering number of 2-qubit quantum channels and unitary gates, which is formalized in Lemma~\ref{lem:covunit}. This lemma provides an upper bound for the covering number of the set of 2-qubit unitaries by using the fact that the space of 2-qubit unitaries is bounded within a unit ball under the operator norm.  

\begin{lemma}(Covering number bounds for 2-qubit unitaries \cite{Caro2022}).\label{lem:covunit} Let $\norm{\cdot}$ be a unitarily invariant norm on complex $4 \times 4$-matrices. The covering number of the set of 2-qubit unitaries $\mathcal{U}\left(\mathbb{C}^2 \otimes \mathbb{C}^2\right)$ w.r.t. the norm $\norm{\cdot}$ can be bounded as
	\begin{equation}
	\mathcal{N}\left(\mathcal{U}\left(\mathbb{C}^2 \otimes \mathbb{C}^2\right),\|\cdot\|, \varepsilon\right) \leqslant\left(\frac{6\left\|I_{\mathbb{C}^4}\right\|}{\varepsilon}\right)^{32},
	\end{equation}

$0<\varepsilon \leqslant\left\|I_{\mathbb{C}^4}\right\|$.
\end{lemma}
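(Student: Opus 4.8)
The plan is to view $\mathcal{U}(\mathbb{C}^2\otimes\mathbb{C}^2)=\mathcal{U}(\mathbb{C}^4)$ as a bounded subset of the finite-dimensional real vector space $\mathbb{C}^{4\times 4}$ of complex $4\times 4$ matrices, which has real dimension $2\cdot 4^2=32$, and then to apply the standard volumetric covering-number estimate for bounded subsets of finite-dimensional normed spaces. The only structural fact needed is that $\|\cdot\|$ is unitarily invariant: for any $4\times4$ unitary $U$ one has $\|U\|=\|U\,I_{\mathbb{C}^4}\|=\|I_{\mathbb{C}^4}\|$, so $\mathcal{U}(\mathbb{C}^4)$ lies on the sphere of radius $r:=\|I_{\mathbb{C}^4}\|$ and in particular is contained in the closed ball $B_r$ of radius $r$ in $(\mathbb{C}^{4\times 4},\|\cdot\|)$.

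The next step is the generic ball bound $\mathcal{N}(B_r,\|\cdot\|,\delta)\le(1+2r/\delta)^{32}$, obtained by the usual maximal-packing argument: take an inclusion-maximal $\delta$-separated subset $P\subseteq B_r$; maximality forces $P$ to be a $\delta$-cover of $B_r$, while the open balls of radius $\delta/2$ about the points of $P$ are pairwise disjoint and all contained in $B_{r+\delta/2}$, so identifying the space linearly with $\mathbb{R}^{32}$ and comparing Lebesgue volumes (all norm balls being dilates and translates of one another) yields $|P|\,(\delta/2)^{32}\le(r+\delta/2)^{32}$, i.e.\ $|P|\le(1+2r/\delta)^{32}$.

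The only genuine subtlety is that, under the notion of covering number used here, an $\varepsilon$-cover of $\mathcal{U}(\mathbb{C}^4)$ must itself consist of unitaries, whereas the packing above produces a cover made of arbitrary matrices in $B_r$. I would remedy this with a two-step net: take a $(\varepsilon/2)$-cover $C$ of the ambient ball $B_r$ with $|C|\le(1+4r/\varepsilon)^{32}$; for each $c\in C$ lying within distance $\varepsilon/2$ of $\mathcal{U}(\mathbb{C}^4)$ select one unitary $u_c$ with $\|c-u_c\|\le\varepsilon/2$, discarding the remaining net points. Given any unitary $V$, pick $c\in C$ with $\|V-c\|\le\varepsilon/2$; then $c$ is retained and $\|V-u_c\|\le\varepsilon$ by the triangle inequality, so $\{u_c\}$ is an $\varepsilon$-cover of $\mathcal{U}(\mathbb{C}^4)$ by unitaries of cardinality at most $(1+4r/\varepsilon)^{32}\le(6r/\varepsilon)^{32}$ whenever $0<\varepsilon\le r=\|I_{\mathbb{C}^4}\|$, which is exactly the claimed estimate.

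I do not expect a real obstacle, since the statement is essentially a packaging of the elementary volume bound; the step requiring the most care is the bookkeeping around the ambient embedding and the re-centering onto $\mathcal{U}(\mathbb{C}^4)$. In particular one should resist exploiting the intrinsic $16$-dimensional Lie-group structure of $\mathcal{U}(4)$ --- which would sharpen the exponent but complicates both the volume comparison and the projection onto a curved set --- and instead accept the ambient exponent $32$ together with the factor $6$ produced by the two-step net.
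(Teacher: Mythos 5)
Your argument is correct, and it coincides with the standard volumetric proof given in the cited source (Caro et al.\ 2022, from which this paper imports the lemma without reproducing a proof): embed $\mathcal{U}(\mathbb{C}^4)$ in the radius-$\|I_{\mathbb{C}^4}\|$ ball of the $32$-real-dimensional space $\mathbb{C}^{4\times4}$, apply the packing/volume bound, and pay a factor of $2$ in $\varepsilon$ to re-center the net onto unitaries. Your bookkeeping ($1+4r/\varepsilon\le 6r/\varepsilon$ for $\varepsilon\le r$) and your observation that unitary invariance forces $\|U\|=\|I_{\mathbb{C}^4}\|$ are both exactly what is needed.
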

Remark: The metric entropy of $2$-qubit unitaries can then easily be seen as $$ \log\mathcal{N}\left(\mathcal{U}\left(\mathbb{C}^2 \otimes \mathbb{C}^2\right),\|\cdot\|, \varepsilon\right) \leqslant 32 \log \left(\frac{6\left\|I_{\mathbb{C}^4}\right\|}{\varepsilon}\right). $$

\begin{theorem}[Generalization Bound for QMLM \cite{Caro2022}]
For a QMLM $\mathcal{E}_\theta^{\mathrm{QMLM}}$ using $T$ parametrized local quantum channels, we have with high probability over training data of size $N$ that 
\begin{equation}
| R ( \mathcal{E}_\theta^{\mathrm{QMLM}} )-\hat{R} ( \mathcal{E}_\theta^{\mathrm{QMLM}} ) | \in \mathcal{O}\left(\sqrt{\frac{T \log(T)}{N}}\right).
\end{equation}
\end{theorem}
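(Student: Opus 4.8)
\emph{Proof proposal.} The plan is to mirror the three-step pipeline already used for the classical part: (i) bound the metric entropy of the QMLM hypothesis class, (ii) convert this into a bound on the empirical Rademacher complexity via Dudley's entropy integral (\cref{thm:Dudley}), and (iii) invoke the Rademacher generalization bound (\cref{thm:Radbound}). The hypothesis class here is $\mathcal H^{\mathrm{QMLM}} = \{\, x \mapsto \operatorname{tr}[O\, \mathcal E_\theta(\rho(x))] \,\}$, where $\rho(x)$ is the data-encoding input state, $O$ a fixed bounded observable, and $\mathcal E_\theta = \mathcal E^{(T)}_{\theta_T} \circ \cdots \circ \mathcal E^{(1)}_{\theta_1}$ is the composition of $T$ parametrized local ($2$-qubit) channels. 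Since $O$ is bounded and every $\mathcal E^{(i)}$ is CPTP, each hypothesis is uniformly bounded, so the radius $\gamma_0$ entering \cref{thm:Dudley} is an $\mathcal O(1)$ constant.

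First I would establish the covering-number bound. The key observation is that composition of quantum channels is contractive: perturbing a single local channel $\mathcal E^{(i)}$ by $\delta$ in a suitable (diamond / induced) norm perturbs the whole map $\mathcal E_\theta$ by at most $\delta$, and hence the real-valued hypothesis by at most $\|O\|\,\delta$, uniformly over $x$. Therefore a product of $\delta$-covers of the $T$ individual local-gate sets yields a $T\|O\|\delta$-cover of $\mathcal H^{\mathrm{QMLM}}$. Taking $\delta = \varepsilon/(T\|O\|)$ and using \cref{lem:covunit} (together with its channel analogue for the non-unitary local maps), each local cover has size at most $(cT/\varepsilon)^{d}$ with $d$ a universal constant ($d=32$ for $2$-qubit unitaries), so
$$
\log \mathcal N\!\left(\mathcal H^{\mathrm{QMLM}}, \|\cdot\|_{X,\ell_2}, \varepsilon\right)
\;\le\; d\,T \log\!\left(\frac{cT}{\varepsilon}\right)
\;=\; \mathcal O\!\left(T \log \tfrac{T}{\varepsilon}\right),
$$
valid for $\varepsilon$ up to an $\mathcal O(1)$ threshold and zero beyond. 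This is the QMLM counterpart of \cref{lem:metricnn}, and it is exactly the submultiplicativity-of-covering-numbers idea summarized in \cref{fig:proof_structure}.

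Next I would feed this into Dudley's integral. With $\sqrt{\log \mathcal N} = \mathcal O\big(\sqrt{T}\,\sqrt{\log T + \log(1/\varepsilon)}\big)$, the entropy integral $\int_\alpha^{\gamma_0}\sqrt{\log\mathcal N}\,d\varepsilon$ is $\mathcal O\big(\sqrt{T}\,(\sqrt{\log T} + \mathrm{const})\big) = \mathcal O(\sqrt{T\log T})$, because $\int_0^{\mathcal O(1)} \sqrt{\log(1/\varepsilon)}\,d\varepsilon$ converges (to $\sqrt{\pi}/2$). Choosing the free parameter $\alpha = \mathcal O(1/\sqrt N)$ makes the $4\alpha$ term lower order, so \cref{thm:Dudley} gives $\hat{\mathcal R}_S(\mathcal H^{\mathrm{QMLM}}) = \mathcal O\big(\sqrt{T\log T / N}\big)$. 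Finally, since the loss is Lipschitz, the Ledoux--Talagrand contraction inequality gives $\hat{\mathcal R}_S(\ell \circ \mathcal H^{\mathrm{QMLM}}) = \mathcal O(\hat{\mathcal R}_S(\mathcal H^{\mathrm{QMLM}}))$, and \cref{thm:Radbound} then yields $\operatorname{gen} \le \mathcal O(\sqrt{T\log T/N}) + c\sqrt{\log(1/\delta)/N}$, which is $\mathcal O(\sqrt{T\log T/N})$ with high probability.

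The main obstacle --- the place where the argument needs genuine care rather than routine estimation --- is the covering-number step: one must pick norms so that (a) local channels live in a space whose covering number obeys the $(c/\varepsilon)^{d}$ scaling of \cref{lem:covunit}, (b) composition is provably $1$-Lipschitz in those norms so the per-gate errors add rather than compound multiplicatively across the $T$ layers (this contractivity is what prevents a $T$-th-power blow-up and is the crux of why the bound is only $\sqrt{T\log T}$), and (c) the induced perturbation of the scalar hypothesis $x \mapsto \operatorname{tr}[O\,\mathcal E_\theta(\rho(x))]$ is controlled uniformly in the data $x$, so that it matches the data-dependent metric $\|\cdot\|_{X,\ell_2}$ used by \cref{thm:Dudley}. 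Since this is precisely the estimate carried out in \cite{Caro2022}, I would invoke their channel covering-number lemma and reproduce the composition argument, leaving the Dudley and Rademacher steps as the same routine calculation already used for \cref{thm:genNN}.
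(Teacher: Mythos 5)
Your proposal is correct and follows essentially the same pipeline the paper relies on: the paper imports this theorem from \cite{Caro2022} without a self-contained proof, but its own appendix reproduces exactly the ingredients you describe (local-gate covering numbers from \cref{lem:covunit}, the telescoping bound $\|\mathcal U-\mathcal V\|_\diamond \le 2\sum_t\|U_t-V_t\|$ so per-gate errors add rather than compound, the H\"older/diamond-norm control of the measurement outputs, Dudley's integral, and the Ledoux--Talagrand contraction) when proving the hybrid bound. Your identification of the covering-number/contractivity step as the crux is also where the paper places the weight of its own argument.
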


\textbf{Remark}: This implies that the required size of the data $N$ scales as
$$
\sqrt{\frac{T \log(T)}{N}} < \epsilon \implies N > \frac{T \log(T)}{\epsilon^2}
$$
A more refined version of this statement is the following theorem that considers a QMLM where many of the parameterized (local) gates are applied repeatedly. Assume each gate is repeated at most $M$ times.

\begin{theorem}[Generalization Bound for Repeated Local Gates \cite{Caro2022}]
\label{thm:gencaro}
Let $\mathcal{E}_\theta^{\mathrm{QMLM}}$ be a QMLM with an architecture consisting of $T$ independently parameterized 2-qubit Completely Positive Trace-Preserving (CPTP) maps and at most repeated usage of these channels $M$ times. Then, with probability at least $1-\delta$ over the choice of i.i.d. training data $S$ of size $N$ according to $\mathbb{P}$,
\begin{equation}
\begin{aligned}
R\left(\mathcal{E}_\theta^{\mathrm{QMLM}}\right)&-\hat{R}_S\left(\mathcal{E}_\theta^{\mathrm{QMLM}}\right) \\
&\in \mathcal{O}\left(\sqrt{\tfrac{T \log (T M)}{N}}+\sqrt{\tfrac{\log (1 / \delta)}{N}}\right).
\end{aligned}
\end{equation}

\end{theorem}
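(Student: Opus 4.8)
The plan is to control the empirical Rademacher complexity of the loss-composed QMLM hypothesis class by a covering-number estimate, pass through Dudley's entropy integral (\cref{thm:Dudley}), and then convert to a high-probability generalization bound via the Rademacher bound (\cref{thm:Radbound}). The structural point that makes the $M$-dependence mild is that, although the circuit architecture may contain up to $TM$ gate slots, it is parametrized by only $T$ independent $2$-qubit CPTP maps; hence the hypothesis class $\mathcal H^{\mathrm{QMLM}}$ is the image of a $T$-fold product of bounded, finite-dimensional channel spaces under a fixed assembly map, and its metric entropy should scale linearly in $T$ with $M$ appearing only inside a logarithm.

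First I would build an explicit cover. By the argument underlying \cref{lem:covunit} — the set of $2$-qubit CPTP maps sits inside a ball of constant radius in a finite-dimensional real space (e.g.\ via Choi matrices) — one obtains $\mathcal N\big(\mathrm{CPTP}(\mathbb C^2\otimes\mathbb C^2),\norm{\cdot}_\diamond,\delta'\big)\le (c/\delta')^{d}$ for absolute constants $c,d$, so a product cover over the $T$ channels has size at most $(c/\delta')^{dT}$. To turn this into a cover of the induced function class under the data-dependent metric of \cref{thm:Dudley}, I would invoke the stability estimate: every $2$-qubit channel, tensored with the identity on the idle wires, is a diamond-norm contraction and the diamond distance is subadditive under composition, so replacing each of the $T$ channels by its cover element perturbs the circuit output on \emph{every} input state by at most $L\,\delta'\le TM\,\delta'$ in trace distance, where $L\le TM$ is the total gate count. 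Composing with an $L_\ell$-Lipschitz, bounded loss gives a uniform-over-$S$ perturbation $\le L_\ell\,TM\,\delta'$ of $\ell\circ\mathcal E_\theta$; choosing $\delta'=\varepsilon/(L_\ell TM)$ yields
$$\log\mathcal N\big(\ell\circ\mathcal H^{\mathrm{QMLM}},\,\norm{\cdot}_{X,\ell_2},\,\varepsilon\big)\;\le\; dT\log\!\Big(\tfrac{c\,L_\ell\,TM}{\varepsilon}\Big).$$

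Next I would feed this metric-entropy bound into Dudley's integral. With $\gamma_0=O(1)$ (bounded loss) and the integrand $\sqrt{dT\log(c L_\ell TM/\varepsilon)}$ integrable at $0$, the choice $\alpha\asymp 1/\sqrt N$ (or $\alpha\to0$) gives $\int_0^{\gamma_0}\sqrt{\log(cL_\ell TM/\varepsilon)}\,d\varepsilon=O\big(\sqrt{\log(TM)}\big)$, hence $\hat{\mathcal R}_S\big(\ell\circ\mathcal H^{\mathrm{QMLM}}\big)=O\big(\sqrt{T\log(TM)/N}\big)$. \cref{thm:Radbound} then upgrades this to $R(\mathcal E_\theta^{\mathrm{QMLM}})-\hat R_S(\mathcal E_\theta^{\mathrm{QMLM}})\le 2\hat{\mathcal R}_S(\ell\circ\mathcal H^{\mathrm{QMLM}})+c\sqrt{\log(1/\delta)/N}$, which is exactly the claimed bound.

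I expect the stability step to be the main obstacle: one must carefully verify that perturbing the $T$ parametrized channels amplifies the output error by only an $O(TM)$ factor — using that CPTP maps are diamond contractions, that the diamond distance is stable under tensoring with identity, and that it is subadditive under composition — and that this error then propagates through the loss in the \emph{data-dependent} metric required by \cref{thm:Dudley}, so that $M$ remains confined to the logarithm rather than multiplying $T$. The remaining ingredients — the CPTP covering count (a direct analogue of \cref{lem:covunit}) and the entropy-integral evaluation — are routine.
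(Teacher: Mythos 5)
Your proposal is correct and follows essentially the same route as the source: the paper states this theorem as an imported result from \cite{Caro2022} without reproving it, and your covering-number argument (constant-dimensional cover of each of the $T$ independent CPTP maps, diamond-norm telescoping over the at most $TM$ gate slots so that $M$ enters only through the logarithm, then Dudley's entropy integral and \cref{thm:Radbound}) is precisely the mechanism behind that result and the same machinery the paper itself deploys for its hybrid bound in \cref{lem:mainresult} and \cref{thm:dudley-hybrid}. The stability step you flag as the main obstacle is handled exactly as you describe, via subadditivity of the diamond distance under composition of CPTP contractions.
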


\subsection{Generalization Bound in the Hybrid QMLM}
\label{sec:Hybrid proof}

Now that we have derived the covering numbers for each part of the hybrid quantum machine learning model, we investigate how these layers interact to establish the generalization bound. 

A hybrid QNN integrates both classical and quantum computational parts. Mathematically, this can be viewed as a hypothesis class that consists of functions mapping classical data to quantum states, then back to a classical result via measurement. %A purely quantum machine learning model (QMLM) is formulated as a parameterized quantum channel, represented by a completely positive trace-preserving (CPTP) map with trainable parameters. 
Formally, we denote a QMLM as 

$$
\mathcal{E}^{\mathrm{QMLM}}_{\theta, x}(\cdot), 
$$

where $\theta$ are continuous parameters of the quantum gates, i.e. angles in rotation channels and $x$ is the classical input datum. 

We assume in the hybrid setting that the classical datum $x$ is encoded into a quantum state with density matrix $\rho(x)$. As the specific encoding technique has no impact on our proof, we will sometimes omit its dependence on $x$ and just write $\rho$. The classical NN receives the expected measurement outcome $\operatorname{tr}\left(M \cdot \mathcal{E}^{\mathrm{QMLM}}_{\theta, x}\left( \rho(x) \right)\right)$, where $M$ is a measurement operator. We also assume that the classical NN follows empirical risk minimization (ERM) and the optimization over quantum parameters is performed using a classical optimizer, typically gradient-based methods. The model, therefore, consists of a composite hypothesis class
\begin{equation}\label{eqn:hybhypothesis} 
H = \left\{ h(x) = F \cdot \operatorname{tr}\left(M\cdot \mathcal{E}_{{\theta}, x}^{\mathrm{QMLM}} \right) \,\Big|\, F \in \mathcal{C} \right\},
\end{equation}
where the quantum component \( \mathcal{E}_{\boldsymbol{\theta}, x}^{\mathrm{QMLM}} \) represents a parametrized quantum channel applied to the encoded classical input \( x \), followed by a measurement operator \( M \). 
The classical component \( F \in \mathcal{C} \) is a classical NN with $k$-layers of linear operators with bounded norm and $L$-Lipschitz activation functions as defined in Lemma ~\ref{lem:metricnn}.

\begin{theorem}[Generalization Bound in Quantum-Classical Hybrid Models]\label{thm:main}

Let $h_\theta$ be a Hybrid-QMLM with an architecture consisting of $T$ independently parameterized 2-qubit Completely Positive Trace-Preserving (CPTP) maps followed by bounded measurement operators $\norm{M} \leq \beta$ and a $k$-layer classical NN consisting of activation functions and bounded linear layers $\norm{F}_F \leq \alpha$. Then, with probability at least $1-\delta$ over the choice of i.i.d. training data $X=\{x_i\}_{i=1}^N$ of size $N$ according to $\mathbb{P}$,
\begin{equation}
R\left(h_\theta \right)-\hat{R}_X\left(h_\theta \right) \in \tilde{\mathcal O}\!\left(
\frac{\alpha^{k}}{\sqrt{N}}\,
\big( k^{\tfrac{3}{2}}\sqrt{m n}\;+\;\sqrt{T\log T}\big)
\right)
\end{equation}

where $\tilde{\mathcal{O}}(n)$ denotes Big-O notation with poly-logarithmic terms in $n$ hidden.
\end{theorem}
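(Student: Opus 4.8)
The plan is to combine the two covering-number estimates multiplicatively via Lemma~\ref{lem:mainresult} (submultiplicativity of hybrid covering numbers), and then push the resulting metric entropy through Dudley's entropy integral (Theorem~\ref{thm:Dudley}) and the Rademacher bound (Theorem~\ref{thm:Radbound}), exactly as sketched in Figure~\ref{fig:proof_structure}. Concretely, the hybrid hypothesis class in \eqref{eqn:hybhypothesis} factors as the composition $x\mapsto F\big(\operatorname{tr}(M\,\mathcal E_{\theta,x}^{\mathrm{QMLM}})\big)$, so an $\varepsilon$-cover of the whole class can be built from an $\varepsilon_1$-cover of the quantum part (the map $\theta\mapsto \operatorname{tr}(M\cdot\mathcal E_{\theta,x}^{\mathrm{QMLM}}(\rho(x)))$ in the data-dependent $\ell_2$ metric, whose metric entropy scales like $T\log(T/\varepsilon_1)$ from Lemma~\ref{lem:covunit} applied gate-by-gate together with the $\norm{M}\le\beta$ bound) and an $\varepsilon_2$-cover of the classical $k$-layer network (Lemma~\ref{lem:metricnn}), with $\varepsilon \approx \alpha^{k}\varepsilon_1 + \varepsilon_2$ because the classical network is Lipschitz with constant of order $\alpha^k$ (taking $L=1$ for the stated bound). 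Summing the two metric-entropy contributions gives
$$
\log\mathcal N(H,\|\cdot\|_{X,\ell_2},\varepsilon)\;\lesssim\; k\,mn\,\log\!\Big(\tfrac{C k R\,\alpha^k}{\varepsilon}\Big)\;+\;C\,T\,\log\!\Big(\tfrac{C\beta\,\alpha^k T}{\varepsilon}\Big).
$$

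Next I would feed this into Dudley's integral. Since the integrand is $\sqrt{\log\mathcal N}$, I split it as $\sqrt{a+b}\le\sqrt a+\sqrt b$ into a classical piece and a quantum piece. The classical piece $\int_0^{\gamma_0}\sqrt{k\,mn\,\log(Ck R\alpha^k/\varepsilon)}\,d\varepsilon$ evaluates (after the standard substitution and using $\gamma_0 = O(\alpha^k R)$ for the sup-norm of the hybrid output) to $O\big(\alpha^k k^{3/2}\sqrt{mn}\big)$ up to logarithmic factors, reproducing the classical part of Theorem~\ref{thm:genNN}; the quantum piece $\int_0^{\gamma_0}\sqrt{C T\log(C\beta\alpha^k T/\varepsilon)}\,d\varepsilon$ evaluates to $O\big(\alpha^k\sqrt{T\log T}\big)$, reproducing the QMLM scaling of Theorem~\ref{thm:gencaro}. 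Dividing by $\sqrt N$ (the $12/\sqrt n$ prefactor in Theorem~\ref{thm:Dudley}) and inserting into Theorem~\ref{thm:Radbound} with the loss assumed $L$-Lipschitz and bounded yields
$$
R(h_\theta)-\hat R_X(h_\theta)\;\in\;\tilde{\mathcal O}\!\left(\frac{\alpha^{k}}{\sqrt N}\big(k^{3/2}\sqrt{mn}+\sqrt{T\log T}\big)\right),
$$
where the choice $\alpha$ in the infimum of Dudley's bound is taken to be of order $1/\sqrt N$ so the $4\alpha$ term is absorbed, and the $\sqrt{\log(1/\delta)/N}$ additive term is likewise dominated (or kept inside $\tilde{\mathcal O}$).

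The main obstacle is the factorization step: establishing carefully that an $\varepsilon$-net for $H$ is obtained by concatenating nets for the quantum and classical parts, with the radii combining as $\alpha^k\varepsilon_1+\varepsilon_2$. This requires (i) controlling the Lipschitz constant of the classical network $F\in\mathcal C$ with respect to its input in the relevant metric — a telescoping argument over layers giving the factor $\prod_i\|F_i\|\le\alpha^k$ (using $\|F_i\|_\sigma\le\|F_i\|_F\le\alpha$ and $1$-Lipschitz activations) — and (ii) checking that perturbing the quantum output within $\varepsilon_1$ and then applying a \emph{fixed} element of the classical net, versus applying a nearby net element to the \emph{true} quantum output, both stay within $\varepsilon$ of $h$; this is where Lemma~\ref{lem:mainresult} does the bookkeeping via the submultiplicative property of covering numbers under composition with Lipschitz maps. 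A secondary technical point is that the quantum metric entropy must be taken in the same data-dependent $\ell_2$ metric used for the classical part, which is legitimate because the expectation-value map $\theta\mapsto\operatorname{tr}(M\cdot\mathcal E_{\theta,x}(\rho))$ is $1$-Lipschitz from the operator-norm (diamond-norm) distance on channels to the output — uniformly in $x$ since $\|\rho(x)\|_1=1$ and $\|M\|\le\beta$ — so the covering-number bound of Lemma~\ref{lem:covunit}, aggregated over $T$ gates with at most $M$ repetitions, transfers directly. Everything else is routine integral estimation and bookkeeping of constants inside $\tilde{\mathcal O}$.
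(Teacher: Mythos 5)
Your proposal follows essentially the same route as the paper's proof: the same two-term decomposition $h-h'=(Fz-Fz')+(F-K)z'$ with the classical network's Lipschitz constant $\alpha^{k}$ (the paper's $L^{k-1}\alpha^{k}$) multiplying the quantum perturbation, the same transfer of the gate-wise covering of Lemma~\ref{lem:covunit} to the data-dependent $\ell_2$ metric via $\|M\|\le\beta$ and the diamond-norm telescoping, and the same Dudley-plus-contraction finish. The argument is correct; the only cosmetic differences are that the paper integrates Dudley's bound from $0$ rather than choosing the cutoff of order $1/\sqrt{N}$, and makes the $\beta\sqrt{n}$ factors explicit before cancelling $\sqrt{n}$ via the Talagrand--Ledoux step.
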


The proof consists of combining the two results Lemma~\ref{lem:metricblf} and Lemma~ \ref{lem:covunit} to obtain a covering number for the hybrid model, after which we establish the generalization bound. Essentially, we demonstrate that the covering numbers exhibit a submultiplicative property across each layer by addressing how the classical and quantum parts interact as separate layers. This then leads to a standard argument of bounding the Rademacher complexity via Dudley's entropy integral which then delivers the generalization bound.
 %The overall proof structure follows a known sequence of arguments, progressing from establishing the covering number to bounding the Rademacher complexity via Dudley's entropy integral and then deriving a generalization bound from the Rademacher complexity.

We combine previous work on both classical and quantum machine learning models to claim a similar result for hybrid models with multiple classical layers. For a hybrid QMLM and a sufficiently large training data set, a slightly adjusted proof gives a bound that guarantees accurate generalization close to the performance of a fully QMLM with high probability on unseen data.

\textbf{Remark}: For a single layer this bound reduces to the quantum part asymptotically 

$$
\tilde{\mathcal{O}}\big( \frac{\alpha}{\sqrt{N}} \sqrt{m n}\;+\;\sqrt{T\log T}\big) = \tilde{\mathcal{O}}\Big(\sqrt{\frac{T \log(T)}{N}}\Big)
$$and implies that the generalization error of a hybrid model with a single classical layer is not much worse than a fully quantum model.

% explain what we have. what are exciting corollaries from this?
%We derive and prove a generalization bound \( O\left( \sqrt{\frac{T\log{T}}{N}} + \frac{\alpha}{\sqrt{N}} \right) \) for a hybrid QMLM with \( N \) training data points, \( T \) trainable quantum gates, and bounded fully-connected layers \( \|F\| \leq \alpha \).
% introduce the theorem + state its use
% essentially this tells us, a hybrid version with just some classical layers wont perform much differently than a fully qunatum one 
\section{Conclusion}
This work contributes a theoretical foundation for understanding generalization in hybrid quantum-classical machine learning models. We provide the first characterization of learning capacity in hybrid models by decomposing into distinct quantum and classical contributions. Our result shows that introducing bounded classical layers on top of a trainable quantum model does not degrade generalization performance. Instead, hybrid architectures can retain the expressivity and learning guarantees of their fully quantum counterparts while offering practical benefits, such as reduced quantum circuit depth and improved robustness to noise. 
While our bound advances the theoretical understanding of hybrid models, it leaves open the important question of how to determine the optimal balance between quantum and classical components. Addressing this challenge will require new theoretical tools or empirical studies that go beyond the current statistical learning theory tools. Nonetheless, our work provides a stepping stone for principled design and evaluation of hybrid models and opens several promising directions for future research in quantum machine learning theory.
\section{Acknowledgements}
This research is supported by the Bavarian Ministry of Economic Affairs, Regional Development and Energy with funds from the Hightech Agenda Bayern. 

\bibliography{apssamp}% Produces the bibliography via BibTeX.

\appendix 
\section{}
%%%%%%%%%%%%%%%%%% proof section
\label{sec:proof}
In this section, we present the proof for the claimed generalization bounds for hybrid QMLMs by combining bounds from the quantum \ref{sec:quantum_part} and the classical \ref{sec:classical_part} components. The proof is based on showing the submultiplicativity property of the covering numbers when combining both parts.

The first part introduces two lemmata that lead to the classical result from \cite{ABartlett1999} on generalization bounds for NN by deriving a metric entropy bound on the NN. The presented Lemma~\ref{lem:metricblf} on bounded linear functionals has been adapted from~\cite{Wolf2023}. By characterizing the interaction of each layer in the NN, since the layers are bounded linear functionals interspersed with activation functions, we can derive a metric entropy bound for the entire NN.

\begin{lemma}[$\ell_2$ Metric Entropy for Bounded Linear Functionals]
\label{lem:metricblf}
Consider the set $\mathcal{V} = \{ v \in \mathbb{R}^n \mid \norm{v}_{\ell_2} \leq \alpha \}$ of bounded linear functionals (by identifying the vectors $v$ with their duals and interpreting them as functionals on $\mathbb{R}^n$). Then, for any $\varepsilon > 0$, we have
\begin{equation}
\log\mathcal{N}(\mathcal{V}, \norm{\cdot}_{\ell_2}, \varepsilon) \leq n \cdot \log\left(\frac{3\alpha}{\varepsilon}\right).
\end{equation}
\end{lemma}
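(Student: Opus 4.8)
The plan is to reduce the statement to the textbook volumetric covering estimate for a Euclidean ball. Under the identification in the hypothesis, a bounded linear functional is represented by its vector $v$, the class $\mathcal V$ is literally the closed ball $B_\alpha := \{v \in \mathbb R^n : \|v\|_{\ell_2} \le \alpha\}$, and the metric on the function class is just the $\ell_2$-norm on the representing vectors. So it suffices to bound $\mathcal N(B_\alpha, \|\cdot\|_{\ell_2}, \varepsilon)$.

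First I would dispose of the trivial regime: if $\varepsilon \ge \alpha$, then the singleton $\{0\} \subseteq \mathcal V$ is already an $\varepsilon$-cover (every $v \in B_\alpha$ satisfies $\|v - 0\|_{\ell_2} = \|v\|_{\ell_2} \le \alpha \le \varepsilon$), so $\mathcal N = 1$ and $\log \mathcal N = 0 \le n\log(3\alpha/\varepsilon)$, the right-hand side being nonnegative in the range where the lemma is actually applied. Hence I assume $0 < \varepsilon \le \alpha$.

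The core step is a packing argument. Let $U = \{u_1,\dots,u_K\} \subseteq B_\alpha$ be a maximal $\varepsilon$-separated subset, meaning $\|u_i - u_j\|_{\ell_2} > \varepsilon$ for $i \ne j$ and no further point of $B_\alpha$ can be adjoined without violating this. Maximality forces $U$ to be an $\varepsilon$-cover of $B_\alpha$: any $v \in B_\alpha$ lying at distance greater than $\varepsilon$ from every $u_i$ could be added, a contradiction; so $\mathcal N(B_\alpha,\|\cdot\|_{\ell_2},\varepsilon) \le K$. To bound $K$ I compare Lebesgue volumes: the open balls $B(u_i,\varepsilon/2)$ are pairwise disjoint (triangle inequality together with the $\varepsilon$-separation) and all contained in $B(0,\alpha+\varepsilon/2)$. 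Since a Euclidean ball of radius $r$ in $\mathbb R^n$ has volume $c_n r^n$, summing gives $K(\varepsilon/2)^n \le (\alpha+\varepsilon/2)^n$, i.e. $K \le (1 + 2\alpha/\varepsilon)^n$. Finally, $\varepsilon \le \alpha$ yields $1 + 2\alpha/\varepsilon \le 3\alpha/\varepsilon$, so $\mathcal N(B_\alpha,\|\cdot\|_{\ell_2},\varepsilon) \le (3\alpha/\varepsilon)^n$, and taking logarithms gives the claimed bound.

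I do not expect a genuine obstacle here, since this is the standard packing/covering volume argument; the only points needing care are that the cover points must lie in $\mathcal V$ (automatic, because we take a maximal separated subset of $\mathcal V$ itself), the edge case $\varepsilon \ge \alpha$ handled above, and being explicit that the exponent $n$ comes precisely from the $r^n$ scaling of volumes in $\mathbb R^n$. An alternative, if one prefers to avoid invoking maximality, is to build the cover directly by intersecting $B_\alpha$ with a rescaled integer grid of spacing $\varepsilon/\sqrt n$ and counting grid points, but the volumetric bound above is the cleanest route and is exactly the form adapted from \cite{Wolf2023}.
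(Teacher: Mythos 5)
Your proposal is correct and matches the paper's route: the paper's proof simply observes that $\mathcal{V}$ is an $\ell_2$-ball of radius $\alpha$ and invokes the standard metric-entropy bound for norm balls, and your volumetric packing argument is precisely the standard proof of that cited fact (including the correct handling of the $\varepsilon \le \alpha$ regime where the $3\alpha/\varepsilon$ form holds). No gaps.
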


\begin{proof}
Notice that $\mathcal{V} = \{v \in \mathbb{R}^n \mid \norm{v}_{\ell_2} \leq \alpha\}$, which is a $\ell_2$-ball of radius $\alpha$. The result follows from plugging into the metric entropy for norm-balls.
\end{proof}

\begin{lemma}[Metric Entropy for Fully Connected Layer]\label{lem:entropyfc}
Let $\mathcal{F} = \{F \in \mathbb{R}^{m \times n} \mid \norm{F}_F \leq \alpha \}$, where $m$ is the output dimension and $n$ the input dimension. This set represents bounded matrices presenting a fully connected layer in a neural network. Its metric entropy bound is given by 
\begin{equation}
\log\mathcal{N}(\mathcal{F}, \norm{\cdot}_{F}, \varepsilon) \leq nm \cdot \log\left(\frac{3\alpha}{\varepsilon}\right).
\end{equation}
\end{lemma}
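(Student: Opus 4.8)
The plan is to reduce the statement to the covering bound for $\ell_2$-balls already established in Lemma~\ref{lem:metricblf}. The key observation is that the Frobenius norm on $\mathbb{R}^{m\times n}$ is literally the Euclidean norm after vectorization: the map $\mathrm{vec}:\mathbb{R}^{m\times n}\to\mathbb{R}^{mn}$ that stacks the $mn$ entries of a matrix into a single column satisfies $\|F\|_F=\|\mathrm{vec}(F)\|_{\ell_2}$ by definition of $\|\cdot\|_F$. Hence $\mathrm{vec}$ is a linear bijective isometry between $(\mathbb{R}^{m\times n},\|\cdot\|_F)$ and $(\mathbb{R}^{mn},\|\cdot\|_{\ell_2})$.

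First I would record that covering numbers are invariant under isometric bijections between normed spaces, so $\mathcal{N}(\mathcal{F},\|\cdot\|_F,\varepsilon)=\mathcal{N}(\mathrm{vec}(\mathcal{F}),\|\cdot\|_{\ell_2},\varepsilon)$: the image of any $\varepsilon$-cover of $\mathcal{F}$ is an $\varepsilon$-cover of $\mathrm{vec}(\mathcal{F})$ of the same cardinality, and conversely. Next I would identify the image: since $\|F\|_F\le\alpha$ if and only if $\|\mathrm{vec}(F)\|_{\ell_2}\le\alpha$, the set $\mathrm{vec}(\mathcal{F})$ is exactly the closed $\ell_2$-ball of radius $\alpha$ in $\mathbb{R}^{mn}$, which is precisely the set $\mathcal{V}$ of Lemma~\ref{lem:metricblf} with ambient dimension $mn$ in place of $n$. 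Applying that lemma then gives $\log\mathcal{N}(\mathcal{F},\|\cdot\|_F,\varepsilon)\le mn\cdot\log(3\alpha/\varepsilon)$, which is the claim.

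If one prefers a self-contained argument rather than invoking Lemma~\ref{lem:metricblf}, the only piece of real content is the standard volumetric bound for Euclidean balls: take a maximal $\varepsilon$-separated subset of the radius-$\alpha$ ball in $\mathbb{R}^{d}$ with $d=mn$; by maximality it is also an $\varepsilon$-cover, and the open balls of radius $\varepsilon/2$ around its points are pairwise disjoint and contained in the ball of radius $\alpha+\varepsilon/2$, so comparing Lebesgue volumes bounds its cardinality by $(1+2\alpha/\varepsilon)^{d}\le(3\alpha/\varepsilon)^{d}$ for $\varepsilon\le\alpha$. Taking logarithms finishes the proof.

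There is essentially no genuine obstacle: the lemma is a bookkeeping corollary of Lemma~\ref{lem:metricblf} under the identification $\mathbb{R}^{m\times n}\cong\mathbb{R}^{mn}$. The one point that warrants a word of care is the admissible range of $\varepsilon$: the stated inequality is meaningful for $\varepsilon\le\alpha$ (and in any case requires $\varepsilon\le 3\alpha$ for the right-hand side to be nonnegative), while for larger $\varepsilon$ a single point covers $\mathcal{F}$ so the covering number is $1$; I would note this explicitly so the statement is unambiguous, and it is this regime $\varepsilon\le\alpha$ that is subsequently used in the $k$-layer composition of Lemma~\ref{lem:metricnn}.
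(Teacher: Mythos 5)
Your proposal is correct and matches the paper's own proof, which likewise identifies $(\mathbb{R}^{m\times n},\|\cdot\|_F)$ with $(\mathbb{R}^{mn},\|\cdot\|_{\ell_2})$ via vectorization and invokes Lemma~\ref{lem:metricblf}. Your added remarks on the volumetric bound and the admissible range of $\varepsilon$ are sound elaborations but not a different route.
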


\begin{proof}
Recall that $\mathbb{R}^{m \times n}$ is isomorphic to $\mathbb{R}^{mn}$ and notice that $\norm{F}_F = \norm{\phi(F)}_{\ell_2}$, where $\phi$ is the isomorphism between the two spaces. Thus, we apply Lemma~\ref{lem:metricblf} and derive the result.
\end{proof}

\begin{lemma}[Metric Entropy for classical $k$-Layer Neural Networks]\label{lem:metric_entropy_NN}
Let $\mathcal X = \{x \in \mathbb{R}^n : \|x\|_{\ell_2} \le R\}$ be the input domain. 
Let $\sigma:\mathbb{R}\to\mathbb{R}$ be an $L$-Lipschitz activation function applied coordinatewise, and consider the class of $k$-layer networks
$
\mathcal{C} = \{f(x)=F_k\,\sigma\!\big(F_{k-1}\,\sigma(\cdots \sigma(F_1 x)\big),
\; \|F_i\|_F \le \alpha,\; F_i \in \mathbb{R}^{m \times n} \}.
$
Then, for any $\varepsilon > 0$, the covering number of $\mathcal{C}$ under the data dependent 
$\ell_2$ metric
$$
\|f-g\|_{X,\ell_2}
\;:=\;
\left( \frac{1}{n} \sum_{j=1}^n \|f(x_j)-g(x_j)\|_2^2 \right)^{1/2}
$$
satisfies
$$
\log \mathcal N\!\left(\mathcal{C}, \|\cdot\|_{X, \ell_2}, \varepsilon\right)
\;\le\;
k\,m n \cdot \log\!\left(\tfrac{3\,k\,R\,L^{\,k-1}\,\alpha^{\,k}}{\varepsilon}\right),
$$
for $\varepsilon \le k\,R\,L^{\,k-1}\alpha^k$, and equals $0$ for larger $\varepsilon$.
\end{lemma}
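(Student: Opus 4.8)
The plan is to construct an $\varepsilon$-cover of $\mathcal{C}$ by covering each layer's weight matrix separately and then controlling how a per-layer perturbation propagates to the network output. For parameters $\varepsilon_1,\dots,\varepsilon_k>0$ to be fixed at the end, apply Lemma~\ref{lem:entropyfc} to obtain for each $i$ an $\varepsilon_i$-net $\mathcal{F}_i$ of $\{F\in\mathbb{R}^{m\times n}:\|F\|_F\le\alpha\}$ in Frobenius norm with $\log|\mathcal{F}_i|\le mn\log(3\alpha/\varepsilon_i)$. The family of networks whose $i$-th weight matrix is allowed to range over $\mathcal{F}_i$ has cardinality $\prod_{i=1}^k|\mathcal{F}_i|$, and the claim will be that, with the right $\varepsilon_i$, this family is an $\varepsilon$-cover of $\mathcal{C}$ in the data-dependent norm $\|\cdot\|_{X,\ell_2}$.

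Two forward estimates are needed first. Write $h_j(x)$ for the output of the first $j$ layers of a network with weights $(F_1,\dots,F_k)$. Using $\|F_i z\|_2\le\|F_i\|_F\,\|z\|_2\le\alpha\|z\|_2$ together with $\|\sigma(z)\|_2\le L\|z\|_2$ (valid coordinatewise once we assume $\sigma(0)=0$), induction gives $\|h_j(x)\|_2\le\alpha^{j}L^{\,j-1}R$ for all $\|x\|_2\le R$, and in particular $\|f(x)\|_2\le\alpha^{k}L^{\,k-1}R$. Second, the tail map carrying the layer-$i$ output to the network output is the composition of the maps $z\mapsto F_{j+1}\sigma(z)$ for $j=i,\dots,k-1$, each of which is $\alpha L$-Lipschitz, so the tail map is $(\alpha L)^{k-i}$-Lipschitz.

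The core step is the telescoping perturbation bound. Given $f,g\in\mathcal{C}$ with weights $(F_i)$ and $(\hat F_i)$ satisfying $\|F_i-\hat F_i\|_F\le\varepsilon_i$, swap the weight matrices one layer at a time; replacing $F_i$ by $\hat F_i$ changes the layer-$i$ output by at most $\varepsilon_i\,\|\sigma(\hat h_{i-1}(x))\|_2\le\varepsilon_i\,\alpha^{i-1}L^{\,i-1}R$ (with $\hat h_0(x)=x$), and this discrepancy reaches the output amplified by the tail Lipschitz factor $(\alpha L)^{k-i}$. The two factors combine to $\varepsilon_i\,\alpha^{k-1}L^{\,k-1}R$, with all $i$-dependence cancelling, so summing over $i$ gives $\|f(x)-g(x)\|_2\le\alpha^{k-1}L^{\,k-1}R\sum_{i=1}^k\varepsilon_i$ uniformly over $\|x\|_2\le R$; since this pointwise bound dominates $\|f-g\|_{X,\ell_2}$, taking $\varepsilon_i:=\varepsilon/(kR\alpha^{k-1}L^{\,k-1})$ makes the product family an $\varepsilon$-cover. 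Substituting these $\varepsilon_i$ into $\sum_{i=1}^k mn\log(3\alpha/\varepsilon_i)$ yields exactly $kmn\log\big(3kRL^{\,k-1}\alpha^{k}/\varepsilon\big)$. Finally, once $\varepsilon>\alpha^{k}L^{\,k-1}R$ (hence also for $\varepsilon>kRL^{\,k-1}\alpha^{k}$) the forward bound shows the single network with all weights zero is within $\varepsilon$ of every $f\in\mathcal{C}$, so the metric entropy vanishes there, and the restriction $\varepsilon\le kRL^{\,k-1}\alpha^{k}$ is precisely what keeps the claimed logarithm nonnegative.

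I expect the telescoping perturbation bound to be the main obstacle: one must correctly pair each weight-error $\varepsilon_i$ with the forward norm of the activations entering layer $i$ and with the Lipschitz constant of the remaining $k-i$ layers, and check that the exponents conspire to leave the clean prefactor $\alpha^{k-1}L^{\,k-1}R$. A secondary subtlety is the assumption $\sigma(0)=0$ implicit in the statement; without it the forward norm bounds acquire additive $\sqrt{m}\,|\sigma(0)|$ terms, and I would either state this assumption explicitly or absorb it into the activation at the cost of harmless constants.
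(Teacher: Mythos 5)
Your proposal is correct and follows essentially the same route as the paper's proof: a per-layer Frobenius cover via Lemma~\ref{lem:entropyfc}, the forward norm bound $\|a_{i-1}(x)\|_{\ell_2}\le(L\alpha)^{i-1}R$, the telescoping layer-swap giving the uniform prefactor $R(L\alpha)^{k-1}$, and the per-layer radius $\varepsilon/(kR(L\alpha)^{k-1})$. Your explicit flagging of the implicit $\sigma(0)=0$ assumption (also tacit in the paper's induction) and of why the entropy vanishes for large $\varepsilon$ are minor but welcome additions.
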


\begin{proof}
Write the network recursively as
\begin{equation}
\begin{aligned}
&f_0(x):=x, \; \; \quad a_i(x):=\sigma\!\big(f_i(x)\big), \\
&f_i(x):=F_i\,a_{i-1}(x)\qquad (i=1,\dots,k).
\end{aligned}
\end{equation}

By induction using $\|F_i\|_F\le\alpha$ and Lipschitzness of $\sigma$,
\begin{equation}\label{eq:growth}
\begin{aligned}
&\|a_{i-1}(x)\|_{\ell_2} \;\le\; (L\alpha)^{\,i-1}\,\|x\|_{\ell_2} \le\; (L\alpha)^{\,i-1}\,R, \;\\
&\forall x\in\mathcal X,\ \ i=1,\dots,k.
\end{aligned}
\end{equation}

Let $F=f_k(x)$ and $K=h_k(x)$ be two $k$-layer NN. We want to bound $\norm{f_k - h_k}_{X, \ell_2}$.
Replace layers one at a time to obtain the standard telescoping decomposition and using $\|(F_i-K_i)v\|_{\ell_2}\le \|F_i-K_i\|_F\,\|v\|_{\ell_2}$ and \eqref{eq:growth},
\begin{align*}
\|f_k(x)-h_k(x)\|_{\ell_2}
&\le \sum_{i=1}^k (L\alpha)^{\,k-i}\,\|F_i-K_i\|_F\ \|a_{i-1}(x)\|_{\ell_2} \\
&\le \sum_{i=1}^k (L\alpha)^{\,k-i}\,\|F_i-K_i\|_F\ \cdot (L\alpha)^{\,i-1}R \\
&= R\,(L\alpha)^{\,k-1}\,\sum_{i=1}^k \|F_i-K_i\|_F,
\; \forall x\in\mathcal X.
\end{align*}
Since this bound is uniform in $x$, it implies the same inequality in the empirical metric:
\begin{equation}\label{eq:emp-metric}
\|f_k(x)-h_k(x)\|_{X,\ell_2}
\ \le\ R\,(L\alpha)^{\,k-1}\,\sum_{i=1}^k \|F_i-K_i\|_F.
\end{equation}

Choose a per-layer Frobenius radius
$$
\delta \;:=\; \frac{\varepsilon}{k\,R\,(L\alpha)^{\,k-1}}.
$$
If each layer $F_i$ is approximated by $K_i$ within $\delta$ in Frobenius norm, then \eqref{eq:emp-metric}
gives $\|f_k(x)-h_k(x)\|_{X,\ell_2}\le \varepsilon$.
Thus an $\varepsilon$-cover of $\mathcal C$ is obtained by taking the product of
$\delta$-covers of each layer’s parameter ball $\{F_i:\|F_i\|_F\le \alpha\}$.

By Lemma~\ref{lem:entropyfc}, taking products over $k$ layers and
logarithms,
\begin{equation}
\begin{aligned}
&\log \mathcal N\!\left(\mathcal C,\|\cdot\|_{X,\ell_2},\varepsilon\right)\\
\ &\le\ \sum_{i=1}^k mn \log\!\Big(\tfrac{3\alpha}{\delta}\Big)\\
\ &=\ k\,mn \log\!\left(\frac{3\alpha\,k\,R\,(L\alpha)^{\,k-1}}{\varepsilon}\right)\\
\ &=\ k\,mn \log\!\left(\tfrac{3\,k\,R\,L^{\,k-1}\,\alpha^{\,k}}{\varepsilon}\right).
\end{aligned}
\end{equation}

\end{proof}

\begin{theorem}[Generalization Bound in Quantum-Classical Hybrid Models]
\label{thm:main_full}

Let \( H \) be a hypothesis class for a hybrid machine learning model with $T$ quantum gates and a $k$-layer NN, and $X=\{x_i\}_{i=1}^N$ be a sample of size $N$. Assume that the covering number of \( H \) with respect to \( \|\cdot\|_{X, \ell_2} \) satisfies, for any \( \varepsilon > 0 \),
\begin{equation}
\begin{aligned}
			&\mathcal{N}\left(H,\|\cdot\|_{X, \ell_2}, \varepsilon\right) \leq \mathcal{N}\left(\mathcal{F},\|\cdot\|_{X, \ell_2},\frac{\varepsilon}{2\beta\sqrt{n}}\right) \cdot \\ &\mathcal{N}\left(\mathcal{U}\left(\mathbb{C}^2 \otimes \mathbb{C}^2\right),\|\cdot\|,  \frac{\varepsilon}{4TL^{k-1}\alpha^k \beta \sqrt{n}}\right)^T .
		\end{aligned}
\end{equation}
where $\mathcal{U}\left(\mathbb{C}^2 \otimes \mathbb{C}^2\right)$ denotes the set of 2-qubit unitary operators, $\mathcal{F}$ denotes a $k$-layer NN with linear layers as defined in Lemma~\ref{lem:metricnn}. $M$ is a fixed measurement operator with operator norm $\norm{M} \leq \beta$, $T$ is the number of parameterized unitaries, and $n$ is the number of output registers of the quantum circuit.

Furthermore, assume that the loss function \( \ell: H \times \mathcal{X} \times \mathcal{Y} \to [0, M] \) is coordinatewise \( L \)-Lipschitz continuous in its first argument with respect to the norm \( \|\cdot\| \).
Then, with probability at least \( 1 - \delta \) over the random draw of a sample \( X \) with size $N$, for all hypotheses \( h \in H \), the expected loss satisfies:
\begin{equation}
\operatorname{gen}(h) \in \tilde{\mathcal O}\big(\frac{L^{\,k}\alpha^{\,k}}{\sqrt{N}}
( k^{\tfrac{3}{2}}\sqrt{m n}+\sqrt{T\log T})\big),
\end{equation}
where $\tilde{\mathcal{O}}(n)$ denotes Big-O notation with poly-logarithmic terms in $n$ hidden.
\end{theorem}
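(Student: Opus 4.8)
The plan is to turn the assumed submultiplicative covering-number inequality into an additive metric-entropy bound, insert the classical and quantum metric-entropy estimates already established, push the result through Dudley's entropy integral to control the empirical Rademacher complexity of $H$, and finish with the Rademacher generalization bound together with the Lipschitz property of the loss.

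First I would take logarithms in the hypothesis, which converts the product into a sum: $\log\mathcal N(H,\|\cdot\|_{X,\ell_2},\varepsilon)\le\log\mathcal N(\mathcal F,\|\cdot\|_{X,\ell_2},\tfrac{\varepsilon}{2\beta\sqrt n})+T\log\mathcal N(\mathcal U(\mathbb C^2\otimes\mathbb C^2),\|\cdot\|,\tfrac{\varepsilon}{4TL^{k-1}\alpha^k\beta\sqrt n})$. Into the first summand I would substitute Lemma~\ref{lem:metricnn} with the input radius $R$ replaced by the $\ell_2$-bound $\beta\sqrt n$ on the measured quantum output $\operatorname{tr}(M\,\mathcal E^{\mathrm{QMLM}}_{\theta,x}(\rho))$, which contributes a term of order $kmn\log(c_1/\varepsilon)$ with $c_1$ polynomial in $k,L,\alpha,\beta,n$. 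Into the second I would substitute the remark following Lemma~\ref{lem:covunit}, contributing $32T\log(c_2/\varepsilon)$ with $c_2$ polynomial in $T,L,\alpha,\beta,n$ and linear in $\|I_{\mathbb C^4}\|$. Hence $\log\mathcal N(H,\|\cdot\|_{X,\ell_2},\varepsilon)\le kmn\log(c_1/\varepsilon)+32T\log(c_2/\varepsilon)$.

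Next I would feed this into Dudley's entropy integral (Thm.~\ref{thm:Dudley}). Using $\sqrt{x+y}\le\sqrt x+\sqrt y$, the integrand splits, so $\hat{\mathcal R}_X(H)\le\inf_{a}4a+\tfrac{12}{\sqrt N}\big(\sqrt{kmn}\int_a^{\gamma_0}\sqrt{\log(c_1/\varepsilon)}\,d\varepsilon+\sqrt{32T}\int_a^{\gamma_0}\sqrt{\log(c_2/\varepsilon)}\,d\varepsilon\big)$, where $\gamma_0=\sup_{h\in H}\|h\|\le L^{k-1}\alpha^k\beta\sqrt n$, since the measured output has $\ell_2$-norm at most $\beta\sqrt n$ and a $k$-layer net amplifies norms by at most $L^{k-1}\alpha^k$ (the growth estimate used in the proof of Lemma~\ref{lem:metricnn}). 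Each integral is at most $\gamma_0\sqrt{\log(c_i/a)}$; choosing the truncation $a$ of order $1/\sqrt N$ makes $4a$ lower order and turns $\sqrt{\log(c_i/a)}$ into a poly-logarithmic factor in $N$ and the remaining parameters. Collecting the scale factor $\beta\sqrt n$ from $\gamma_0$ with those hidden in $c_1,c_2$ gives a classical contribution $\tilde{\mathcal O}(L^{k}\alpha^k k^{3/2}\sqrt{mn}/\sqrt N)$ (matching the pure-NN bound of Thm.~\ref{thm:genNN}) and a quantum contribution $\tilde{\mathcal O}(L^{k}\alpha^k\sqrt{T\log T}/\sqrt N)$ (the $L^{k-1}\alpha^k$ amplification factor times the QMLM rate of \cite{Caro2022}, with $L^{k-1}\to L^k$ cosmetic), so that $\hat{\mathcal R}_X(H)\in\tilde{\mathcal O}\big(\tfrac{L^k\alpha^k}{\sqrt N}(k^{3/2}\sqrt{mn}+\sqrt{T\log T})\big)$.

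Finally I would invoke Theorem~\ref{thm:Radbound} applied to $\ell\circ H$: since $\ell$ is bounded by $M$ and coordinatewise $L$-Lipschitz in its first argument, a vector-valued contraction argument gives $\hat{\mathcal R}_X(\ell\circ H)\lesssim L\,\hat{\mathcal R}_X(H)$, and the residual term $c\sqrt{\log(1/\delta)/N}$ is absorbed into the $\tilde{\mathcal O}$, which yields the claim. I expect the Dudley step to be the main obstacle: one must track carefully how the scaling denominators $2\beta\sqrt n$ and $4TL^{k-1}\alpha^k\beta\sqrt n$ in the covering radii, together with $\gamma_0\le L^{k-1}\alpha^k\beta\sqrt n$, combine to produce exactly the prefactor $L^k\alpha^k$ and the power $k^{3/2}$, and one must justify the contraction step for the vector-valued hybrid output rather than the scalar case covered by the cited classical results.
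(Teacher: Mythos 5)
Your proposal follows essentially the same route as the paper: take logs of the assumed submultiplicative covering bound, insert the classical metric entropy (Lemma~\ref{lem:metricnn} with $R=\beta\sqrt{n}$) and the quantum one (Lemma~\ref{lem:covunit}), split Dudley's integral via $\sqrt{a+b}\le\sqrt{a}+\sqrt{b}$, and finish with Theorem~\ref{thm:Radbound} plus a Lipschitz contraction. The only substantive difference is bookkeeping of the $\beta\sqrt{n}$ factor from $\gamma_0$: the paper carries it through the Rademacher bound and cancels the $\sqrt{n}$ via the $L/\sqrt{n}$ normalization in the Talagrand--Ledoux step, whereas you drop it earlier and use a plain factor-$L$ contraction — the point you yourself flag as needing care — but the final bound comes out the same.
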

\begin{proof}
The proof will be split over the next subsections. We first derive the covering number of the hypothesis class $H$ of the hybrid QMLM. This is done by realizing that any hypothesis can be covered by coverings of the classical and the quantum component. After deriving the covering number, we will calculate the Rademacher complexity of the hypothesis class via Dudley's entropy integral. Finally, we bound the Rademacher complexity of the loss transformed hypothesis class by scaling the Rademacher complexity of the hypothesis class by a Lipschitz constant that is averaged over all coordinates $L$. This delivers the generalization bound.

\subsection*{Submultiplicativity of Covering Numbers for a Hybrid Quantum-Classical Model}
The hybrid model under consideration comprises two main components: a quantum component and a classical component. The essential step to finding a covering number for the hybrid model is correctly identifying how a covering can be constructed from coverings of the classical and quantum components. Lem.~\ref{lem:covunit} and Lem.~\ref{lem:metricblf} each provide covering numbers and we will show that multiplying both provides an upper bound on the hybrid model's covering number. This submultiplicativity property is the following lemma.	% $\mathcal{N}\left(\mathcal{U}\left(\mathbb{C}^2 \otimes \mathbb{C}^2\right),\|\cdot\|, \varepsilon\right)$
\begin{lemma}\label{lem:mainresult}
		Let $\mathcal{U}\left(\mathbb{C}^2 \otimes \mathbb{C}^2\right)$ denote the set of $2$-qubit unitary operators, let $n$ be the number of output registers in the quantum circuit and let $\mathcal{F}$ denote a $k$-layer NN with Frobenius norm $\norm{F_i}_F \leq \alpha$ for each linear layer $F_i$ and $L$-Lipschitz activation functions. Let $M$ be a fixed measurement operator with operator norm $\norm{M} \leq \beta$. Define the hypothesis class $H$ as introduced in Equation~\ref{eqn:hybhypothesis} with $T$ the number of parametrized unitaries.
		Let $\mathcal{N}\left(\mathcal{U}\left(\mathbb{C}^2 \otimes \mathbb{C}^2\right),\|\cdot\|, \varepsilon\right)$ and $\mathcal{N}\left(\mathcal{F},\|\cdot\|_F, \varepsilon\right)$ be the covering numbers for the 2-qubit unitary operators and the $k$-layer NN derived previously and let $X=\{x_i\}_{i=1}^N$ be a sample of size $N$.
		
		Then, for any $\epsilon > 0$, the covering number of $H$ with respect to $\norm{\cdot}_{X, \ell_2}$ satisfies
		\begin{align*}
			&\mathcal{N}\left(H,\|\cdot\|_{X, \ell_2}, \varepsilon\right) \leq \mathcal{N}\left(\mathcal{F},\|\cdot\|_{X, \ell_2},\frac{\varepsilon}{2\beta\sqrt{n}}\right) \cdot \\ &\mathcal{N}\left(\mathcal{U}\left(\mathbb{C}^2 \otimes \mathbb{C}^2\right),\|\cdot\|,  \frac{\varepsilon}{4TL^{k-1}\alpha^k \beta \sqrt{n}}\right)^T.
		\end{align*}
\end{lemma}	

\begin{proof}

We will build coverings for (i) each unitary layer in operator norm $\|\cdot\|$, and
(ii) the classical NN $F$ in the sample dependent semi-norm $\|\cdot\|_{X, \ell_2}$.
The overall cover of $H$ will be the corresponding product covering.

We aim to construct an $\varepsilon$-covering for the hypothesis class $H$ by combining $\varepsilon_F$-coverings for the classical component $\mathcal{F}$ and $\varepsilon_U$-coverings for the quantum component $\mathcal{U}$. The key idea is to ensure that perturbations in $F$ and $U$ individually lead to a controlled perturbation in the composite hypothesis $h$.
Let $\mathcal{C}_{\mathcal{F}}(\varepsilon_F)$ be an $\varepsilon_F$-covering of the fully connected layers $\mathcal{F}$ with respect to $\norm{\cdot}_F$, and $\mathcal{C}_{\mathcal{U}}(\varepsilon_U)$ be an $\varepsilon_U$-covering of the unitaries $\mathcal{U}$ with respect to the spectral norm (induced 2-norm) $\norm{\cdot}_{\ell_2}$. We determine $\varepsilon_F$ and $\varepsilon_U$ accordingly, such that a hypothesis in $H$ is covered by its components w.r.t. $\norm{\cdot}_{\ell_2}$.
We will write a hypothesis $h$ as
$$
h(x)=F \cdot z(x),
\;
z(x) = \operatorname{tr}\left[M\,\mathcal{E}^{\mathrm{QMLM}}_{\theta,x}(\rho(x))\right]_{i=1}^n \in\mathbb{R}^n,
$$
where $\rho(x)$ is the density operator of the encoded data $x$ and $M$ is a measurement operator\footnote{For complete mathematical accuracy, we should specify that the general form of the local 2-qubit unitary operators consists of tensor products $I \otimes \dots \otimes U_k \otimes \dots \otimes I$.}. This separates the classical part $F \in \mathcal{C}$ and the quantum part $z(x)$. As a reminder, $$\mathcal{E}^{\mathrm{QMLM}}_{\theta} = U_T\dots U(\cdot)U^\dagger \dots U_T^\dagger,
$$ where the $U_i \in \mathcal{U}$ are 2-qubit unitaries depending on $\theta_i$.

Consider two hypotheses $h, h' \in H$. As introduced above, we can write them as
\begin{equation}
\begin{aligned}
    &h(x) = F \cdot z(x)\\
    &h'(x) = K \cdot z'(x),
\end{aligned}
\end{equation}
where $F, K \in \mathcal{F}$ are the classical parts and $z, z'$ are the quantum parts. 

To construct a covering for the hypothesis class $H$, we want to bound 
\begin{equation}\label{eqn:AB}
\norm{h - h'}_{X, \ell_2}.
\end{equation} 

We will bound this by separating the classical layers and the quantum layer and then bound them individually. The separation uses the common insertion trick in triangle inequalities:
\begin{equation}
h(x)-h'(x)
= \big(Fz(x)-Fz'(x)\big) + (F-K)\,z'(x).
\end{equation}

Inserting this into Eqn.~\ref{eqn:AB} gives us
\begin{equation}\label{eqn:separated}
\begin{aligned}
&\|h-h'\|_{X, \ell_2} = \|\big(Fz-Fz'\big) +(F-K)\,z' \|_{X, \ell_2} \\
& \leq  \|\big(Fz-Fz'\big) \|_{X, \ell_2} + \| (F-K)\,z' \|_{X, \ell_2}.
\end{aligned}
\end{equation}
The Eqn.~\ref{eqn:separated} shows the varied parts that need to be covered. We will (i) first bound the term $ \|\big(Fz-Fz'\big) \|_{X, \ell_2} $ corresponding to the varied quantum part and (ii) then the second term $\| (F-K)\,z' \|_{X, \ell_2}$ corresponding to the varied classical NN.

Starting with the quantum part, notice that for any $z,z'\in\mathbb R^n$,
\begin{align*}
\|F(z)-F(z')\|_{\ell_2}
&= \big\|F_k\big[\sigma(\cdots)-\sigma(\cdots)\big]\big\|_{\ell_2} \\
&\le \|F_k\| \ \big\|\sigma(\cdots)-\sigma(\cdots)\big\|_{\ell_2} \\
&\le \|F_k\|_F \ \big\|\sigma(\cdots)-\sigma(\cdots)\big\|_{\ell_2} \\
&\le \alpha \ \big\|\sigma(\cdots)-\sigma(\cdots)\big\|_{\ell_2} \\
&\le \alpha\ L\ \big\|F_{k-1}\sigma(\cdots)-F_{k-1}\sigma(\cdots)\big\|_{\ell_2}\\
&\ \ \vdots \\
&\le L^{\,k-1}\,\alpha^{\,k}\ \|z-z'\|_{\ell_2}.
\end{align*}
Thus, 
$$\|F(z)-F(z')\|_{X,\ell_2}
\ \le\ L^{\,k-1}\,\alpha^{\,k}\ \|z-z'\|_{X,\ell_2}.
$$

We now examine the varied quantum parts $\|z-z'\|_{\ell_2}$. Notice that for a measurement operator $M$ with $\|M\| \le \beta$, and $\Phi,\Psi$ quantum channels, and a state $\rho$ with $\|\rho\|_1=1$, we have
\begin{equation}\label{eq:meas-diamond}
\begin{alignedat}{2}
&|z_j(x)-z'_j(x)|
= \big|\operatorname{tr}\!\big(M\,(\Phi-\Psi)(\rho(x))\big)\big|& \\[4pt]
&\le \|M\|\,\big\|(\Phi-\Psi)(\rho(x))\big\|_1 \; \; \, \,  (\text{H\"older's Inequality (Schatten)})& \\[4pt]
&\le \|M\|\,\|\Phi-\Psi\|_\diamond \qquad \qquad \qquad \qquad \; \,  (\text{Def. Diamond norm})& \\[4pt]
&\le \beta\,\|\Phi-\Psi\|_\diamond. \qquad \qquad \qquad \qquad \qquad \qquad \quad \; \, \; \, (\|M\|\le \beta)&
\end{alignedat}
\end{equation}

Consequently, if we define the $n$-dimensional measurement vector
$$
z(x) \coloneqq \left[\operatorname{tr}\!\big(M\,\mathcal{E}_{\boldsymbol{\theta},x}^{\mathrm{QMLM}}(\rho(x))\big)\right]_{j=1,\dots,n} \in\mathbb{R}^n 
$$
then for any two channels $\mathcal{E},\mathcal{E}'$,
\begin{equation}\label{eq:z-l2-diamond}
\|z(x)-z'(x)\|_{\ell_2}
\;\le\; \beta\,\sqrt{n}\;\big\|\mathcal{E}_{\boldsymbol{\theta},x}-\mathcal{E}'_{\boldsymbol{\theta},x}{}\big\|_\diamond .
\end{equation}

Using the same approach as in the proof of Lemma C.1 from \cite{Caro2022}, let
$\mathcal U \coloneqq \mathcal{E}_{\boldsymbol{\theta},x}$ and $ \mathcal V \coloneqq \mathcal{E}'_{\boldsymbol{\theta},x}$ be the unitary channels of the quantum parts $z$ and $z'$. 
A telescoping expansion gives
\begin{align}
\big\|\mathcal U-\mathcal V\big\|_\diamond
\;\le\; 2\sum_{k=1}^T \|U_k - V_k\|.
\label{eq:depthT-diamond}
\end{align}

Combining Eqn.~\ref{eq:z-l2-diamond} and Eqn.~\ref{eq:depthT-diamond} gives 
\begin{equation}
    \|z-z'\|_{X, \ell_2}
\;\le\; 2 \beta\,\sqrt{n}\; \sum_{k=1}^T \|U_k - V_k\| .
\end{equation}
Finally, for the quantum part we get
\begin{equation}\label{eqn:quantumpart} \tag{i}
    \|Fz-Fz'\|_{X, \ell_2}
\;\le\; 2 L^{\,k-1}\,\alpha^{\,k}\beta\,\sqrt{n}\; \sum_{t=1}^T \|U_t - V_t\| .
\end{equation}
Moving on to the classical part, notice that $$\| (F-K)\,z' \|_{X, \ell_2}$$ fulfills the conditions for Lemma~\ref{lem:metricnn} with $R = \beta \sqrt{n}$, since $\norm{z(x)}_{\ell_2} \le \beta \sqrt{n}$. This let's us know the covering number for any covering of radius $\epsilon$ for the $k$-layer NN $F$. Together, we see that 
\begin{equation}
\begin{aligned}    
\|h-h'\|_{X, \ell_2}  \leq \;  &2 L^{\,k-1}\,\alpha^{\,k} \beta\,\sqrt{n}\; \sum_{k=1}^T \|U_k - V_k\| \\
&+ \| (F-K)\,z' \|_{X, \ell_2}.
\end{aligned}
\end{equation}
Set
\begin{equation}
\begin{aligned}
&\|U_k - V_k\| \leq \frac{\epsilon}{2 \cdot 2T L^{\,k-1} \alpha^k \beta \sqrt{n}} = \varepsilon_U, \\
&\|(F - K)z'\|_{X, \ell_2} \leq \frac{\epsilon}{2\beta \sqrt{n}} = \varepsilon_F.
\end{aligned}
\end{equation}
Then, for any $h \in H$, there exists a covering \(\mathcal{C}_{H}\) with $h' \in \mathcal{C}_{H}$ since
\begin{equation}
\norm{h - h'}_{X, \ell_2} \leq  \frac{\varepsilon}{2} + \frac{\varepsilon}{2} = \varepsilon.
\end{equation}
Hence, the constructed cover \(\mathcal{C}_{H}\) satisfies:
\[
\sup_{h \in H} \inf_{h' \in \mathcal{C}_{H}} \|h - h'\|_{X, \ell_2} \leq \epsilon,
\]
and since each hypothesis in $H$ can be approximated within $\varepsilon$ by combining an $\varepsilon_F$-cover for $\mathcal{F}$ and $T$ many $\varepsilon_U$-cover for $\mathcal{U}$, the total number of elements in $\mathcal{C}_{H}(\varepsilon)$ is at most the product of the covering numbers of $\mathcal{F}$ and $\mathcal{U}$ $T$-times, we proved that:
\begin{equation}\label{eq:hybridcover}
\begin{aligned}
			&\mathcal{N}\left(H,\|\cdot\|_{X, \ell_2}, \varepsilon\right) \leq \mathcal{N}\left(\mathcal{F},\|\cdot\|_{X, \ell_2},\frac{\varepsilon}{2\beta\sqrt{n}}\right) \cdot \\ &\mathcal{N}\left(\mathcal{U}\left(\mathbb{C}^2 \otimes \mathbb{C}^2\right),\|\cdot\|,  \frac{\varepsilon}{4TL^{k-1}\alpha^k \beta \sqrt{n}}\right)^T.
\end{aligned}
\end{equation}
Applying the logarithm and using the product rule gives the metric entropy $H$ by
\begin{equation}\label{eq:metricentropyH}
\begin{aligned}
&\log\mathcal{N}\left(H,\|\cdot\|_{\ell_2}, \varepsilon\right) \leq \\ &\log\mathcal{N}\left(\mathcal{U}\left(\mathbb{C}^2 \otimes \mathbb{C}^2\right),\|\cdot\|,  \frac{\varepsilon}{4T \alpha \beta \sqrt{n}}\right) \cdot T \;+ \\
&\log\mathcal{N}\left(\mathcal{F},\|\cdot\|_F,\frac{\varepsilon}{2 \beta \sqrt{n}}\right).
\end{aligned}
\end{equation}
\end{proof}

\subsubsection*{Rademacher Complexity}

\begin{lemma}[Radius of the hybrid hypothesis class]\label{lem:hybrid-radius}
Let $H$ be the hybrid hypothesis class from Lemma~\ref{lem:mainresult} producing
$n$-dimensional outputs via a measurement operator $M$ with $\|M\|\le \beta$.
Define $\gamma_0:=\sup_{h\in H}\|h\|_{X,\ell_2}$, where
$\|h\|_{X,\ell_2}^2 := \frac{1}{N}\sum_{i=1}^N \|h(x_i)\|_{\ell_2}^2$.
Then
\[
\gamma_0 \;\le\; \beta \sqrt{n}.
\]
\end{lemma}

\begin{proof}
Each coordinate of $h(x)$ is a measurement expectation of the form
$\operatorname{tr} (M\rho)$ with $| \operatorname{tr} (M\rho) |\le \|M\|\,\|\rho\|_1\le \beta$,
independently of the classical layers. Hence $\|h(x)\|_{\ell_2}\le \beta\sqrt{n}$ for every input $x$,
and therefore
$\|h\|_{X,\ell_2}^2=\frac{1}{N}\sum_{i=1}^N \|h(x_i)\|_{\ell_2}^2 \le \beta^2 n$.
Taking the supremum over $h\in H$ gives $\gamma_0\le \beta\sqrt{n}$.
\end{proof}

\begin{theorem}[Rademacher Complexity Bound for the Hybrid Class]\label{thm:dudley-hybrid}
Under the setting of Lemma~\ref{lem:mainresult}, with the neural-network
metric-entropy lemma and the diamond-norm covering for the quantum part, we have
for all $\varepsilon>0$,
\begin{equation}
\begin{aligned}
\log \mathcal N\!&\left(H,\|\cdot\|_{X,\ell_2},\varepsilon\right)
\;\le\log \mathcal{N}\left(\mathcal{F},\|\cdot\|_{X, \ell_2},\frac{\varepsilon}{2\beta\sqrt{n}}\right)\\
+&T\log \mathcal N\left(\mathcal{U}\left(\mathbb{C}^2 \otimes \mathbb{C}^2\right),\|\cdot\|,\tfrac{\varepsilon}{4TL^{k-1}\alpha^{k}\beta\sqrt{n}}\right).
\end{aligned}
\end{equation}
Consequently, Dudley’s entropy integral (with cutoff) yields
$$
\hat{\mathcal R}_N(H) \in 
\tilde{\mathcal{O}}\left(\frac{L^{\,k-1}\alpha^{\,k}\beta\sqrt{n}} {\sqrt{N}}\, (kR\sqrt{kmn} + \sqrt{T\log T})\right).
$$

\end{theorem}

\begin{proof}
We start by inserting the metric entropy derived in Lemma\eqref{lem:mainresult} into Dudley's Theorem (Thm.~\ref{thm:Dudley}):
\begin{equation}
\hat{\mathcal{R}}_N(H) \leq \frac{12}{\sqrt{N}}\int_{0}^{\gamma_0}\sqrt{\log \mathcal{N}\left(\mathcal{F}\right) + T\log \mathcal N \left(\mathcal{U}\right)}d\varepsilon.
\end{equation}
We omitted the metric and resolution size for readability. By Lemma~\ref{lem:hybrid-radius}, $\gamma_0$ represents the upper bound of the integration bounds, since it is the coarsest resolution for the hypothesis class $H$.
Then by applying the inequality \( \sqrt{a + b} \leq \sqrt{a} + \sqrt{b} \):
\begin{align}
&\hat{\mathcal R}_N(H) \le \frac{12}{\sqrt{N}}\bigg( \int_{0}^{\gamma_0}\sqrt{\log \mathcal{N}\left(\mathcal{F},\|\cdot\|_{X, \ell_2},\frac{\varepsilon}{2\beta\sqrt{n}}\right)}d\varepsilon\\
&+ \int_{0}^{\gamma_0}\sqrt{T\log \mathcal N \left(\mathcal{U}\left(\mathbb{C}^2 \otimes \mathbb{C}^2,\|\cdot\|,\tfrac{\varepsilon}{4TL^{k-1}\alpha^{k}\beta\sqrt{n}}\right) \right)}d\varepsilon\bigg),
\end{align}

We analyze each summand separately:
% NN
Starting from the classical contribution, insert Lemma~\ref{lem:metric_entropy_NN}. The integral becomes directly evaluatable,
\begin{equation}\label{eqn:classical_integral}
\frac{12}{\sqrt N}\int_{0}^{\gamma_0}
\sqrt{kmn\log\!\Big(\tfrac{6kRL^{k-1}\alpha^{k}\,\beta\sqrt n}{\varepsilon}\Big)}\,d\varepsilon
=\frac{12\sqrt{k\,m\,n}}{\sqrt N}\;J,
\end{equation}
where
$$
J:=\int_{0}^{\gamma_0}\sqrt{\log\!\Big(\tfrac{C}{\varepsilon}\Big)}\,d\varepsilon,\qquad
C:=6kR\,L^{\,k-1}\alpha^{\,k}\,\beta\sqrt n.
$$
Evaluate $J$ by the change of variables $u=\log(C/\varepsilon)$, i.e.\ $\varepsilon=C\,e^{-u}$ and
$d\varepsilon=-C\,e^{-u}\,du$. The limits become
$\varepsilon\downarrow 0 \Rightarrow u\to\infty$ and $\varepsilon=\gamma_0 \Rightarrow u=\log(C/\gamma_0)$, hence
$$
J=C\int_{\log(C/\gamma_0)}^{\infty} u^{1/2}e^{-u}\,du
= C\,\Gamma\!\Big(\tfrac{3}{2},\,\log\tfrac{C}{\gamma_0}\Big),
$$
where $\Gamma(s,x)$ is the gamma function. Using the identity
$\Gamma(\tfrac{3}{2},x)=\tfrac{\sqrt{\pi}}{2}\,\operatorname{erf}(\sqrt{x})+\sqrt{x}\,e^{-x}$ and $e^{-\log(C/\gamma_0)}=\gamma_0/C$,
we obtain
$$
J=\frac{\sqrt{\pi}}{2}\,C\,\operatorname{erf}\!\Big(\sqrt{\log\tfrac{C}{\gamma_0}}\Big)
\;+\;
\gamma_0\,\sqrt{\log\tfrac{C}{\gamma_0}},
$$
where $\operatorname{erf}(x) \coloneqq \tfrac{2}{\sqrt{\pi}}\int_0^x \exp(-t^2) \;dt $. Multiplying $J$ by $\frac{12\sqrt{k\,m\,n}}{\sqrt N}$ gives us

$$
\frac{12\sqrt{k\,m\,n}}{\sqrt N} \left(\frac{\sqrt{\pi}}{2}\,C\,\operatorname{erf}\!\Big(\sqrt{\log\tfrac{C}{\gamma_0}}\Big)
\;+\;
\gamma_0\,\sqrt{\log\tfrac{C}{\gamma_0}} \right),
$$
which is the classical contribution \ref{eqn:classical_integral} we started with. 

Using the fact that $\operatorname{erf}(\cdot) \leq 1$ and hiding constants and poly-logarithmic terms using the $\tilde{\mathcal{O}}$ notation, we see that the Rademacher complexity of the classical part is
$$
\tilde{\mathcal{O}}\left(\sqrt{kmn}\frac{kRL^{\,k-1}\alpha^{\,k}\beta\sqrt{n}} {\sqrt{N}}\right).
$$ 

%% Quantum 
Moving on to the quantum part, we see
\begin{equation}
\frac{12}{\sqrt{N}} \int_{0}^{\gamma_0} \!\!\!\!\sqrt{
T \log \mathcal N\left(\mathcal U\left(\mathbb C^{2}\otimes\mathbb C^{2}, \|\cdot\|,\ 
\tfrac{\varepsilon}{4TL^{k-1}\alpha^{k}\beta\sqrt{n}}\right)\right)
} d\varepsilon.
\end{equation}
With the change of variables
\[
\delta \;=\; \frac{\varepsilon}{4\,T\,L^{\,k-1}\alpha^{\,k}\,\beta\sqrt{n}},\qquad
b \;=\; \frac{\gamma_0}{4\,T\,L^{\,k-1}\alpha^{\,k}\,\beta\sqrt{n}},
\]
this becomes
\begin{equation}\label{eq:q-dudley-change}
\frac{48TL^{k-1}\alpha^{\,k}\beta\sqrt{n}}{\sqrt{N}}
\int_{0}^{b}
\sqrt{ T \log \mathcal N\left(\mathcal U(\mathbb C^{2}\otimes\mathbb C^{2}),\ \|\cdot\|,\ \delta\right)}\; d\delta.
\end{equation}
By Lemma~\ref{lem:covunit}, for any unitarily invariant norm $\|\cdot\|$ and $0<\delta\le \|I_{\mathbb{C}^4}\|$,
\[
\log \mathcal N\!\left(\mathcal U(\mathbb C^{2}\!\otimes\!\mathbb C^{2}),\ \|\cdot\|,\ \delta\right)
\;\le\; 32\,\log\!\Big(\tfrac{6\|I_{\mathbb{C}^4}\|}{\delta}\Big).
\]
Plugging this into \eqref{eq:q-dudley-change} and using that for the spectral norm $\|I_{\mathbb{C}^4}\|=1$, yields
\begin{equation}\label{eq:q-dudley-preclosed}
\frac{48\,T\,L^{\,k-1}\alpha^{\,k}\,\beta\sqrt{n}}{\sqrt{N}}\;\sqrt{32\,T}\;
\int_{0}^{b} \sqrt{\,\log\!\Big(\tfrac{6}{\delta}\Big)}\; d\delta.
\end{equation}
Evaluating the integral exactly gives
\[
\int_{0}^{b} \sqrt{\,\log\!\Big(\tfrac{6}{\delta}\Big)}\; d\delta
\;=\;
\frac{\sqrt{\pi}}{2}\,6\,
\operatorname{erf}\!\Big(\sqrt{\log\tfrac{6}{b}}\Big)
\;+\;
b\,\sqrt{\log\tfrac{6}{b}}.
\]
Hence, when inserting back \( b=\dfrac{\gamma_0}{4\,T\,L^{\,k-1}\alpha^{\,k}\,\beta\sqrt{n}} \), the quantum Dudley term becomes
\begin{equation}\label{eq:q-dudley-closed}
\frac{3\sqrt{512}}{\sqrt{N}}\;\gamma_0\,\sqrt{T}\;
\sqrt{\log\!\Big(\frac{24T\,L^{\,k-1}\alpha^{\,k}\,\beta\sqrt{n}}{\gamma_0}\Big)}.
\end{equation} Using $\gamma_0\le L^{\,k-1}\alpha^{\,k}\beta\sqrt{n}$ we see that the quantum part scales as \(\displaystyle \frac{L^{\,k-1}\alpha^{\,k}\beta\sqrt{n}}{\sqrt{N}}\,\sqrt{T\log T}\) up to explicit constants.

Combining both expressions, we arrive at the claimed

$$
\tilde{\mathcal{O}}\left(\frac{L^{\,k-1}\alpha^{\,k}\beta\sqrt{n}} {\sqrt{N}}\, (kR\sqrt{kmn} + \sqrt{T\log T})\right).
$$

\end{proof}

\subsubsection*{Generalization Bound}

Having obtained an upperbound on the empirical Rademacher complexity \(\hat{\mathcal{R}}_N(H)\) of the hypothesis class \(H\), we aim to apply Theorem~\ref{thm:Radbound} to establish a corresponding generalization bound. However, it is important to observe that Theorem~\ref{thm:Radbound} requires the empirical Rademacher complexity of the loss-transformed hypothesis class \(\mathcal{R}_S(\ell \circ H)\), rather \(\hat{\mathcal{R}}_S(H)\). To overcome this gap, we will introduce the following lemma which provides an upper bound on \(\mathcal{R}(\ell \circ H)\) in terms of \(\mathcal{R}(H)\) for loss functions that are Lipschitz-continuous, such as the multi-class hinge loss. We will state it without proof.

\begin{lemma}[Talagrand-Ledoux's Contraction Lemma\cite{ledoux}]
Let \( H \) be a hypothesis class and let \( \ell: H \times \mathcal{Y} \to \mathbb{R} \) be an \( L \)-Lipschitz loss function with respect to its first argument. For any fixed sample \( S = \{x_1, x_2, \dots, x_n\} \), the $L$-Lipschitz loss will scale $\frac{L}{\sqrt{n}}$ when averaged over the sample. Thus the empirical Rademacher complexity of the loss-transformed hypothesis class \( \ell \circ H \) satisfies:
\begin{equation}
\hat{\mathcal{R}}_S(\ell \circ H) \leq \frac{L}{\sqrt{n}} \cdot \hat{\mathcal{R}}_S(H)
\end{equation}
\end{lemma}
By applying this lemma, we can estimate the Rademacher complexity of the loss-transformed hypothesis class by $L \cdot \hat{\mathcal{R}}_S(H)$.
Let $\ell$ be a mean-reduced, coordinatewise $L$-Lipschitz loss (so that Talagrand--Ledoux
yields the factor $L/\sqrt{n}$). Combining Lemma~\ref{lem:mainresult},
Theorem~\ref{thm:dudley-hybrid}, and Talagrand--Ledoux’s contraction, we obtain for our dataset $X$:
$$
\begin{aligned}
&\hat{\mathcal R}_X(\ell\circ H)
\;\le\; \\
&\frac{L}{\sqrt{n}}\ \hat{\mathcal R}_X(H)
\;\in\;
\tilde{\mathcal O}\!\left(
\frac{L^{k}\alpha^{\,k}\beta}{\sqrt{N}}\,
\big( kR\sqrt{k m n}\;+\;\sqrt{T\log T}\big)
\right).
\end{aligned}
$$
To conclude the proof, we use Thm.~\ref{thm:Radbound}:
$\operatorname{gen}(h)\le 2\,\hat{\mathcal R}_X(\ell\!\circ\! H)+c\sqrt{\log(1/\delta)/N}$.
Talagrand--Ledoux’s contraction gives
$\hat{\mathcal R}_X(\ell\!\circ\! H)\le \frac{L}{\sqrt{n}}\hat{\mathcal R}_X(H)$.
Substitute the bound from Theorem~\ref{thm:dudley-hybrid}:
$\hat{\mathcal R}_X(H)\in \tilde{\mathcal O}\!\big(\frac{L^{\,k-1}\alpha^{\,k}\beta\sqrt{n}}{\sqrt{N}}
( kR\sqrt{k m n}+\sqrt{T\log T})\big)$,
which yields the stated result after canceling $\sqrt{n}$.
\end{proof}

\end{document}